\theoremstyle{definition}
\newtheorem{definition}{Definition}[section]
\newtheorem{remark}{Remark}
\newtheorem{theorem}{Theorem}
\title{Intrinsic Random Functions and Parametric Covariance Models of Spatio-Temporal Random Processes on the Sphere
}
\author{
  Jongwook Kim \\
  Indiana University\\
  Bloomington \\
  \texttt{jki5@iu.edu} \\
  \And
    Chunfeng Huang \\
  Indiana University\\
  Bloomington \\
  \texttt{huang48@iu.edu} \\
   \And
  Nicholas Bussberg \\
  Elon University \\
  Elon\\
  \texttt{nbussberg@elon.edu} \\
}
\begin{document}
\maketitle

\begin{abstract}
Identifying an appropriate covariance function is one of the primary interests in spatial and spatio-temporal statistics because it allows researchers to analyze the dependence structure of the random process. For this purpose, spatial homogeneity and temporal stationarity are widely used assumptions, and many parametric covariance models have been developed under these assumptions. However, these are strong and unrealistic conditions in many cases. In addition, on the sphere, although different statistical approaches from those on Euclidean space should be applied to build a proper covariance model considering its unique characteristics, relevant studies are rare. In this research, we introduce novel parameterized models of the covariance function for spatially non-homogeneous and temporally non-stationary random processes on the sphere. To alleviate the spatial homogeneity assumption and temporal stationarity, and to consider the spherical domain and time domain together, this research will apply the theories of Intrinsic Random Functions (IRF). We also provide a methodology to estimate the associated parameters for the model. Finally, through a simulation study and analysis of a real-world data set about global temperature anomaly, we demonstrate validity of the suggested covariance model with its advantage of interpretability.
\end{abstract}

\keywords{Spatio-temporal statistics \and Non-homogeneity \and Non-stationarity \and Covariance function \and Intrinsic random function \and Random process with stationary increments}

\section{Introduction} \label{sec:intro}
Although many techniques and theories have been developed to relax the strong assumptions of spatial homogeneity and temporal stationarity for covariance modeling in spatio-temporal data analysis, many of these methods are based on the assumption of Euclidean space, while those for spatio-temporal data on the sphere are relatively sparse. Spatial stochastic processes over the surface of the sphere need different statistical and mathematical approaches relative to those over Euclidean space. Porcu et al.\@ (2016) \cite{porcu2016spatio} asserted that the choices of different distance metrics can seriously influence the estimation quality for the parameter, particularly for a larger region on the sphere. Huang et al.\@ (2011) \cite{huang2011validity} showed how to deal with non-homogeneous spatial data on the sphere by extending the application of intrinsic random functions, a flexible family of non-homogeneous models (Matheron \@ 1973) \cite{Materon1973}.

Another important aspect of parametric covariance modeling is how to parameterize the covariance functions. Obukhov (1947) \cite{Obukhov1947} demonstrated how to parameterize the covariance models on the sphere by assuming of spatial homogeneity. Roy (1969) \cite{Roy1969} developed a covariance model for a spatio-temporal process by adding a stationary time term to Obukhov's model. Porcu et al.\@ (2016) \cite{porcu2016spatio} introduced diverse parametric covariance models for homogeneous and stationary spatio-temporal random processes with comparison of two different types of models: adaptive models from the Euclidean case and models from direct construction on the sphere. The adaptive models are modified from models for Euclidean space. The models by direct construction are only valid for the great circle distance. According to their conclusion, direct construction is a more intuitive and flexible approach whereas the adaptive model requires stronger assumptions.

Our goal is to develop parametric covariance functions through direct construction on the sphere for spatio-temporal random processes. Furthermore, we want to extend this approach to spatially non-homogeneous and temporally non-stationary random processes. In Section \ref{sec:homo_stationary}, we provide the literature review about homogeneous and stationary random processes on the sphere. Section \ref{sec:non_homo_non_stationary} relaxes the assumptions of spatial homogeneity and temporal stationarity through the concepts of intrinsic random functions and intrinsic covariance functions. Finally, Section \ref{sec:data_analysis} demonstrates a methodology of the parameter estimation for the suggested covariance models. This will be illustrated by the simulation study and analysis of a real world data set on global temperature anomaly.

\bigskip

\section{Homogeneous and Stationary Random Processes on the Sphere}\label{sec:homo_stationary}

\subsection{Covariance Functions from Spectral Representation}\label{subsec2.1}

Let $X(P,t)$ be a spatio-temporal process with $P \in \mathbb{S}^2 \text{ and } t \in \mathbb{R}$ where $\mathbb{S}^2$ denotes the surface of the unit sphere.\\
Assume that the process is continuous in quadratic mean. Then, we can expand such a random process through its spectral representation, which is convergent in quadratic mean (Yaglom 1961, Jones 1963, Roy 1969) \cite{Yaglom1961} \cite{jones1963} \cite{Roy1969}:

\begin{align}
X(P,t)= \sum_{\ell=0}^{\infty}\sum_{m=-\ell}^{\ell}Z_{\ell,m}(t)Y_\ell^m(P) \label{eq:spectral_X}
\end{align}
where $$Z_{\ell,m}(t)=\int_{\mathbb{S}^2} X(P,t)Y_\ell^m(P)dP$$
\bigskip

and the $Y_\ell^m(\cdot)$ are the real spherical harmonics such that\\
\begin{equation*}
 \begin{cases}
&Y_\ell^m(P) = \sqrt{\frac{2\ell+1}{2\pi} \frac{(\ell-m)!}{(\ell+m)!}} P_\ell^m(\cos{\phi}) \cos(m\theta)\\
&Y_\ell^0(P) = \sqrt{\frac{2\ell+1}{4\pi}} P_l^0(\cos{\Phi})\\
&Y_\ell^{-m}(P) = \sqrt{\frac{2\ell+1}{2\pi} \frac{(\ell-m)!}{(\ell+m)!}} P_\ell^m(\cos{\phi}) \sin(m\theta)
\end{cases}
\end{equation*}
\bigskip

Each coefficient $Z_{\ell,m}(t)$ is a function of time $t$ and free from location $P$. The term $\theta \in [0,2\pi]$ is longitude, and $\phi \in [-\frac{\pi}{2},\frac{\pi}{2}]$ is latitude; therefore, $P=(\theta, \phi)$. The function $P_\ell^m(\cdot)$ is the associated Legendre polynomials.\\

While the terms `(weakly) homogeneity' and `(weakly) stationarity' are often used interchangeably, in this paper, we reserve `homogeneity' for the spatial domain and `stationarity' for the temporal domain.

\bigskip

\begin{definition}\label{def:homogeneous_stationary}
A spatio-temporal random process $X(\cdot,\cdot)$ is (spatially) homogeneous and (temporally) stationary if the process has

$$E\biggl( X(P,t) \biggl) = E\biggl( X(gP, \tau t) \biggl)$$
and
$$Cov\biggl( X(P,t), X(Q,s) \biggl) = Cov\biggl( X(gP, \tau t), X(gQ, \tau s) \biggl)$$

for any $P,Q \in \mathbb{S}^2$; $t,s \in \mathbb{R}$, $g \in G$, and $\tau \in \mathrm{T}$ where $G$ is a group of rotation operators and $\mathrm{T}$ is a group of shift operators. In other words, the process is spatially rotation invariant and temporally shift invariant.
\end{definition}

\bigskip

Since the coefficients $Z_{\ell,m}(t)$ are random variables, we first need to introduce their appropriate covariance function to obtain the covariance of the process $X(P,t)$. Roy (1969) \cite{Roy1969} showed that if a random process is homogeneous and stationary, then the coefficients $Z_{\ell,m}(t)$ are uncorrelated with each other when the values of $\ell$ and $m$ differ. This independence of the coefficients with respect to the spatial terms allows us to express the covariance structure of a spatio-temporal process in a stationary form, depending only on its time lag given the order of the spherical harmonic $\ell$. As a result, we can express the covariance of the coefficients and the original process as shown by Roy (1969) \cite{Roy1969}\\
 \begin{eqnarray}
a_\ell(h) := Cov\biggl(Z_{{\ell},{m}}(t), Z_{{\ell^{'}},{m^{'}}}(s) \biggl) &=& \mathbbm{1}(\ell,\ell') \mathbbm{1}(m,m') \int_{-\infty}^{\infty} e^{i\omega h} dG_\ell(\omega) \label{eq:al}\\
\phi_0(\overrightarrow{PQ},h) := Cov\biggl(X(P,t), X(Q,s) \biggl) &=& \sum_{\ell=0}^{\infty} \frac{2\ell+1}{4\pi} a_{\ell}(h) P_{\ell}(\cos{\overrightarrow{PQ}}), \label{eq:cov_x}
\end{eqnarray}
where $h$ is $t-s$, $P_\ell(\cdot)$ is a Legendre polynomial, and $G_\ell(\omega)$ is a non-negative measure bounded on $[0,\infty)$.\\

The term $\sum_{\ell=0}^{\infty} \frac{(2\ell+1)}{4\pi} a_\ell(h)$ should be finite because the second moment should be finite. Furthermore, $a_\ell(h)$ should be a positive semi-definite covariance function in terms of $t$. The covariance function of $X(t)$ in \eqref{eq:cov_x} contains $a_\ell(h)$, which is a stationary function with respect to the temporal component. Therefore, we can parameterize the given covariance function by introducing an appropriate form for $a_\ell(h)$. For example, under the specific assumption that $a_\ell(h) = \alpha^\ell g^\ell(h)$, Yarenko (1983) \cite{yadrenko1983spectral} demonstrated that the proper parametric covariance function for the expression in \eqref{eq:cov_x} can be derived using the generating function of Legendre polynomials. Similarly, Porcu et al. (2016) \cite{porcu2016spatio} introduced various parametric covariance models—such as Negative Binomial, Multiquadric, Sine Series, and Sine Power—designed for spatially homogeneous and temporally stationary spatio-temporal random processes on the sphere.\\

\bigskip

\begin{remark}\label{rmk:parametric_models}
Assume $a_\ell(h) = \alpha^\ell g^\ell(h)$ in \eqref{eq:al} where $g(\cdot)$ is a continuous temporal covariance function and $\alpha \in (0,1)$. This assumption leads to the parametric covariance models presented in Table \ref{tab:tab1}. The covariance functions $\phi_0(\psi,h)$ are functions of a great circle distance $\psi$ and time lag $h$, which implies homogeneity and stationarity as desired. \\
\end{remark}

\begin{table}[h!] \label{tab1}
\begin{center}
\caption{\label{tab:tab1} Spatio-temporal parametric covariance models on the sphere. A function $g(\cdot)$ is any covariance function on the real line, and $\gamma$ is a scale parameter. (Yarenko, 1983, \cite{yadrenko1983spectral}, Porcu et al.\@, 2016) \cite{porcu2016spatio} )}
\begin{tabular}{||p{4cm} p{7.5cm} p{4cm}||} 
 \hline
 Model name & Analytic expression & Parameters range\\  [1ex] 
 \hline\hline
 Generating function of Legendre polynomials & $\phi_0(\psi, h) = \frac{\gamma}{4\pi}\frac{ 1 - \alpha^2 g^2(h)}{(1-2 \alpha g(h) \cos{\psi} + \alpha^2 g^2(h))^{3/2}}$ & $\alpha \in (0,1)$\\ [1ex] 
 \hline
 Negative binomial & $\phi_0(\psi, h)= \frac{\gamma}{4\pi}\{\frac{1-\alpha}{1-\alpha g(h) \cos{\psi}}\}^\tau$ & $\alpha \in (0,1)$, $\tau>0$ \\ [1ex] 
 \hline
 Multiquadric &  $\phi_0(\psi, h)= \frac{\gamma}{4\pi}\frac{(1-\alpha)^{2\tau}}{\{1+\alpha^2 - 2 \alpha g(h) \cos{\psi}\}^\tau}$ & $\alpha \in (0,1)$, $\tau>0$ \\ [1ex] 
 \hline
 Sine series & $\phi_0(\psi, h)= \frac{\gamma}{4\pi} e^{\alpha g(h) \cos{d}-1} \{ 1 + \alpha g(h) \cos{\psi} \}/2$ & $\alpha \in (0,1)$ \\  [1ex] 
 \hline
 Sine power & $\phi_0(\psi, h)= \frac{\gamma}{4\pi} \{ 1 - 2^{-\eta} (1 - \alpha g(h) \cos{\psi} ) \}^{\eta/2}$ & $\alpha \in (0,1)$, $\eta \in (0,2]$ \\  [1ex] 
 \hline
 Adapted multiquadric & $\phi_0(\psi, h)= \frac{\gamma}{4\pi}\{\frac{\{1+\alpha g^2(h)\}(1-\alpha)}{1+ \alpha^2 g^2(h) - 2 \alpha g(h) \cos{\psi}}\}^\tau$ & $\alpha \in (0,1)$, $\tau>0$ \\ [1ex] 
 \hline
 Poisson & $\phi_0(\psi, h)= \text{exp}[\lambda\{\cos{(\psi)} \alpha g(h) - 1\}]$ & $\alpha \in (0,1)$, $\lambda>0$ \\ [1ex] 
 \hline
\end{tabular}
\end{center}
\end{table}

\bigskip

\section{Non-Homogeneous and Non-Stationary Spatio-Temporal Random Processes on the Sphere}\label{sec:non_homo_non_stationary}

In this section, we relax the assumptions of spatial homogeneity and temporal stationarity by applying the concept of intrinsic random functions, a flexible family of non-homogeneous models. Intrinsic random functions (Matheron, 1973) \cite{Materon1973}  are generalized random processes, which can be practically used for the representation of spatially non-homogeneous and temporally non-stationary properties. To define an intrinsic random function on the sphere and in time, it is first necessary to understand the allowable measures for each space: the sphere and the real line.

\bigskip

\subsection{Allowable Measures on $\mathbb{S}^2$} \label{subsec:allowable_measures_sphere}

\begin{definition}\label{def:allowable_measure_spatial}
Let $\Lambda^{s}$ denote the dual of $C(\mathbb{S}^2)$, that is, the set of all finite regular signed Borel measures on $\mathbb{S}^2$. For $f \in C(\mathbb{S}^2)$, we define that
\begin{align*}
f(\lambda^s) &= \int_{\mathbb{S}^2} f(x) \lambda^s(dx)\\
f(g \lambda^s) &= \int_{\mathbb{S}^2} f(gx) \lambda^s(dx)
\end{align*}
where $g\lambda^{s}$ is a rotation measure and $g \in G^s$, with $G^s$ being a group of rotation operators.\\
Then, $\lambda^s \in \Lambda^s$ is called an allowable measure of order $\kappa$ on $\mathbb{S}^2$ if it annihilates the spherical harmonics with orders less than $\kappa$ (Huang et al. 2019) \cite{HUANG20197}. That is, we can state that

$$Y_{\ell}^{m}(\lambda^s) = \int_{\mathbb{S}^2} Y_\ell^m(P) \lambda^s(dP) = 0 \quad \text{for} \quad 0 \le \ell < \kappa, \quad |m| \le \ell.$$

We define $\Lambda_\kappa^s$ as the class containing all allowable measures of order $\kappa$. Thus, $\Lambda_{\kappa+1}^s \subset \Lambda_{\kappa}^s \subset \Lambda^s$. 
\end{definition}

\bigskip

\begin{remark} \label{rmk:truncation_measure}
(Huang et al. 2019) \cite{HUANG20197} \\
Define 
$$\lambda_Q^s(dP) := \delta_Q(dP) - \sum_{\ell<\kappa} \sum_{m=-\ell}^{\ell} Y_{\ell}^{m}(Q) Y_\ell^m(dP)$$

where $P,Q \in \mathbb{S}^2$ and $\delta_Q(dP)$ is the Dirac measures on $\mathbb{S}^2$. Then $\lambda_{Q}^s$ is the allowable measure on $\mathbb{S}^2$ that truncates spherical harmonics with orders less than $\kappa$. That is, given $t_0\in \mathbb{R}$, any spatio-temporal random process on the sphere $X(P,t_0)$,
\begin{align*}
    X(\lambda_{Q}^s) &= \int_{\mathbb{S}^2} \sum_{\ell=0}^{\infty} \sum_{m=-\ell}^{\ell} Z_\ell^m(t_0) Y_\ell^m(P) \lambda_{Q}^s(dP)\\
    &= \sum_{\ell=\kappa}^{\infty} \sum_{m=-\ell}^{\ell} Z_\ell^m(t_0) Y_\ell^m(Q)\\
    &=: X_\kappa(Q,t_0)
\end{align*}
\end{remark}

Analogously, we can also explore temporally non-stationary random processes using the properties of the intrinsic random function on $\mathbb{R}$.

\bigskip

\subsection{Allowable Measures on $\mathbb{R}$} \label{subsec:allowable_measures_R}

\begin{definition} (Chiles \@ 1999) \cite{Chiles1999} \label{def:allowable_measure_temporal}
A discrete measure $\lambda^t$ is an allowable measure of order $d$ on $\mathbb{R}$ if it annihilates polynomials of degree less than $d$. That is, a set of $\lambda_i^t$ to $n$ points $x_i$ on $\mathbb{R}$ defines a discrete measure\\
$$\lambda^t = \sum_{i=1}^{n} \lambda_i^t \delta_{x_i}^t \quad \text{where $\delta_{x_i}^t$ is the Dirac measure on $\mathbb{R}$ at $x_i$}$$
and for any $f \in C(\mathbb{R}),$
\begin{align*}
f(\lambda^t) &= \int f(x) \lambda^t(dx)  = \sum_{i=1}^{n} \lambda_i^t f(x_i)\\
f(\tau_h\lambda^t) &= \int f(\tau_h x) \lambda^t(dx)  = \sum_{i=1}^{n} \lambda_i^t f(x_i + h)
\end{align*}
where $\tau_h$ is a shifted measure by any $h \in \mathbb{R}$, and $\tau_h \in G^t$. Then,
\begin{align}
p(\lambda^t) = \int p(x) \lambda^t(dx) = \sum_{i=1}^{n} \lambda_i^t x_i^\ell = 0 \label{eq:allowable_temporal}
\end{align}

where $\ell=0,1,2,\cdots, d-1$ and $p(\cdot)$ is a polynomial function at the degree less than $d$. We denote that $\Lambda_d^t$ is a class of such allowable measures, and then it is clear that $\Lambda_{d+1}^t \subset \Lambda_{d}^t \subset \Lambda^t$.
\end{definition}


\bigskip

\begin{remark} (Kim \& Huang 2024) \cite{kim2025randomprocessesstationaryincrements} \label{rmk:allowable_measure_delta}
    $$\lambda_{\Delta_{h,s}^d}^t = \sum_{k=0}^{d} (-1)^k \binom{d}{k} \delta_{s - kh}, \quad s,h \in \mathbb{R}$$
    
where $s,h \in \mathbb{R}$, and $\delta_{s-kh}$ is the Dirac measure on $\mathbb{R}$. Then, $\lambda_{\Delta_{h,s}^d}$ is a n-th differencing operator. This is an allowable measure of the order $d$ on $\mathbb{R}$. For example, for any spatio-temporal random processes $X(P_0,t)$ given $P_0 \in \mathbb{S}^2$, if d = 1, $X(\lambda_{\Delta_{h,s}}^t) = \Delta_{h} X(P_0, s)  = X(P_0, s) - X(P_0, s-h)$. Likewise, if d=2, $X(\lambda_{\Delta_{h,s}^2}^t) = \Delta_{h}^2 X(P_0, s) = X(P_0, s) - 2 X(P_0, s-h) + X(P_0, s-2h)$.
\end{remark}

\bigskip

Through this allowable measure, we can also define random processes with stationary increments, which is another useful concept for modeling non-stationary random processes on the real line.

\bigskip

\begin{definition} \label{def:random_process_stationary_increment}
(Wei Chen et al. 2024; Yaglom 1958) \cite{Chen2024} \cite{Yaglom1958}\\ 
$\{X(t), t \in \mathbb{R}\}$ is a random process with stationary increments of order $d$ on the real line if $\Delta_{h}^{d}X(t)$ is stationary where \\
$$\Delta_{h}^{d}X(t) = \sum_{k=0}^{d} (-1)^k \binom{d}{k} X(t - kh), \quad t,h \in \mathbb{R}, \quad d \in \mathbb{Z}_{\ge 0}.$$
Additionally, $D^d(\iota; h_1,h_2)$ is its structure function where\\ 

$$D^d(\iota; h_1,h_2) = E\biggl([\Delta_{h_1}^{d}X(t+\iota)] [\Delta_{h_2}^{d}X(t)]\biggl).$$
This is symmetric and positive semi-definite. 
\end{definition}

\bigskip

By using these concepts and the properties of the allowable measures on $\mathbb{S}^2$ and $\mathbb{R}$, an intrinsic random function on $\mathbb{S}^2 \times \mathbb{R}$ can be defined.

\bigskip

\subsection{Intrinsic Random Function for a Spatio-Temporal Random Process on the Sphere}\label{subsec:IRF_spatio_temporal}

\begin{definition}\label{def:irf_spatio_temporal}
A spatio-temporal random process $X(\cdot,\cdot)$ on $\mathbb{S}^2$ is an intrinsic random function of order $(\kappa,d)$, i.e. IRF($\kappa,d$) if
$$E\biggl( X(\lambda^s, \lambda^t) \biggl) = E\biggl( X(g\lambda^s, \tau \lambda^t) \biggl)$$
$$Cov\biggl( X(\lambda_{1}^s, \lambda_{1}^t) ,X(\lambda_{2}^s, \lambda_{2}^t) \biggl) = Cov\biggl( X(g \lambda_{1}^s, \tau \lambda_{1}^t) ,X(g\lambda_{2}^s, \tau \lambda_{2}^t) \biggl)$$

for any allowable measures $\lambda^s, \lambda_{1}^s, \lambda_{2}^s \in \Lambda_{\kappa}^s$ and $\lambda^t, \lambda_{1}^t, \lambda_{2}^t \in \Lambda_{d}^t$. The classes $\Lambda_{\kappa}^s$ and $\Lambda_{d}^t$ are of all possible allowable measures on $\mathbb{S}^2$ and $\mathbb{R}$ with orders $\kappa$ and $d$, respectively; $g$ is any rotation operator and $\tau$ is any shift operator.
\end{definition}

\bigskip

Spatially homogeneous and temporally stationary random process on the sphere is an IRF(0,0). Note that this notation differs slightly from Matheron’s (1973) \cite{Materon1973} in which a stationary process is represented by IRF(-1). In this paper, we use $\kappa \ge 1$ to represent nonstationary random processes, instead of $\kappa \ge 0$.

\bigskip

\begin{definition} \label{def:icf}
    Suppose $X(P,t)$ is an IRF($\kappa$, $d$) on the sphere for any $P\in \mathbb{S}^2$ and any $t \in \mathbb{R}$. Then, the spatially homogeneous and temporally stationary covariance function $Cov\biggl( X(\lambda_1^s,\lambda_1^t),  X(\lambda_2^s,\lambda_2^t) \biggl)$ is called the intrinsic covariance function of order ($\kappa,d$), denoted ICF($\kappa,d$), where $\lambda_{1}^s, \lambda_{2}^s \in \Lambda_{\kappa}^s$ and $\lambda_{1}^t, \lambda_{2}^t \in \Lambda_{d}^t$.
\end{definition}

\bigskip

Now, by applying the concepts of an IRF($\kappa$, $d$) and ICF($\kappa$, d) on the sphere, we can explore the properties of non-homogeneous and non-stationary spatio-temporal random processes. First, we can derive the spectral representation of the process. 

\bigskip

\subsection{Spectral Representation of Non-Homogeneous and Non-Stationary Random Processes on the Sphere}

Let $X(P,t)$ be an IRF($\kappa,d$). In this case, the process is both spatially non-homogeneous and temporally non-stationary. By applying the structure of the spectral representation in \eqref{eq:spectral_X}, we can extend the stochastic process $X(P,t)$ to

\begin{align}
    X(P,t) &= \sum_{\ell=0}^{\infty} \sum_{m=-\ell}^{\ell} Z_{\ell,m}(t) Y_\ell^m(P) \label{eq:spectral_irf(k,d)}
\end{align}

where $P \in \mathbb{S}^2$ and $t \in \mathbb{R}$. Here, the coefficient $Z_{\ell,m}(t)$ is a non-stationary random process on the real line; therefore, it can be expressed by the spectral representation of the random processes of stationary increments with order $d$.

\bigskip

\begin{remark} \label{rmk:coef_spectral}
Let $Z_{\ell,m}(t)$ in \eqref{eq:spectral_irf(k,d)} be a random process with stationary increments of order $d$ and $\ell \ge \kappa$. Then, the following spectral representation can be achieved (Yaglom 1958) \cite{Yaglom1958}.
    \begin{align}
     Z_{\ell,m}(t) &= \int_{-\infty}^{\infty} \biggl(e^{it\omega} -1 - it\omega - \cdots - \frac{(i t \omega)^{n-1}}{(n-1)!} \biggl) \frac{1}{(i\omega)^d} dZ_\ell(\omega) \label{eq:spectral_i(d)}\\
     \Delta_h^d Z_{\ell,m}(t) &= \int_{-\infty}^{\infty} e^{i\omega t} (1-e^{-i h\omega})^d \frac{1}{(i\omega)^d} dZ_\ell(\omega) \label{eq:spectral_differenced} \\
    D(\iota; h_1,h_2) &:= E\biggl([\Delta_{h_1}^{d}Z_{\ell,m}(t+\iota)] [\Delta_{h_2}^{d}Z_{\ell',m'}(t)]\biggl) \notag\\
    &=\int_{-\infty}^{\infty} e^{i\omega \iota} (1-e^{-i h_1 \omega})^d (1-e^{i h_2 \omega})^d \frac{1}{\omega^{2d}} dF_{\ell}(\omega) \mathbbm{1}(\ell,\ell') \mathbbm{1}(m, m')  \label{eq:spectral_structure_id}
\end{align}

where $F_\ell(\omega)$ is a bounded non-decreasing function and $E(|dZ_\ell(\omega)|^2) = dF_\ell(\omega)$ for $Z_\ell(\omega)$, a random function with uncorrelated increments. For an IRF($\kappa,d$), $E\biggl([Z_{\ell,m}(t)] [Z_{\ell',m'}(s)]\biggl) = 0$ if $\ell \ne \ell'$ and $m \ne m'$ for $\ell', \ell' \ge \kappa$.

\end{remark}

\bigskip

\begin{remark} (Kim \& Huang 2024) \cite{kim2025randomprocessesstationaryincrements} \label{rmk:irf_rpsi} 
Assume that the spectral density of $Z_{\ell,m}(t)$ in \eqref{eq:spectral_i(d)} exists; therefore, $dF_\ell(\omega)$ in \eqref{eq:spectral_structure_id} can be expressed as $f_\ell(\omega)d\omega$. Then, for $d \ge 1$ and $d \in \mathbb{Z}$, a stochastic process $Z_{\ell,m}(t)$ is an intrinsic random function of order $d$ on the real line if and only if $Z_{\ell,m}(t)$ is a random process with stationary increments order $d$.
\end{remark}

\bigskip

\begin{theorem} \label{thm:icf_spatio_temporal}
A spatio-temporal random process on the sphere $X(P,t)$ is an intrinsic random function with orders ($\kappa,d$) if and only if its truncated and differenced process 
$$\Delta_h^d X_\kappa(P,t) = \sum_{\ell=\kappa}^{\infty} \sum_{m=-\ell}^{\ell} \Delta_{h}^{d} Z_{\ell,m}(t) Y_\ell^m(P), \quad P \in \mathbb{S}^2, \quad t,h \in \mathbb{R}$$ 
is spatially homogeneous and temporally stationary.
\end{theorem}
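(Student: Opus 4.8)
The plan is to realize the truncated and differenced process as the image of $X$ under a pair of explicit allowable measures, and then read homogeneity and stationarity off directly from the definition of an IRF($\kappa,d$). Concretely, applying the spatial truncation measure $\lambda_Q^s$ of Remark \ref{rmk:truncation_measure} in the $P$-variable and the temporal differencing measure $\lambda_{\Delta_{h,t}^d}^t$ of Remark \ref{rmk:allowable_measure_delta} in the $t$-variable gives $X(\lambda_Q^s,\lambda_{\Delta_{h,t}^d}^t)=\Delta_h^d X_\kappa(Q,t)$, because the former annihilates the spherical harmonics of order below $\kappa$ and the latter is exactly the $d$-th order difference operator. Since $\lambda_Q^s\in\Lambda_\kappa^s$ and $\lambda_{\Delta_{h,t}^d}^t\in\Lambda_d^t$, this identity is the bridge between the two sides of the equivalence.

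For the forward direction I would first establish two transport identities. The temporal one, $\tau_c\,\lambda_{\Delta_{h,t}^d}^t=\lambda_{\Delta_{h,t+c}^d}^t$, follows immediately by shifting each Dirac mass in the definition. The spatial one, $g\,\lambda_Q^s=\lambda_{gQ}^s$, is the crux: writing $\lambda_Q^s(dP)=\delta_Q(dP)-K_\kappa(Q,P)\,dP$ with $K_\kappa(Q,P)=\sum_{\ell<\kappa}\sum_m Y_\ell^m(Q)Y_\ell^m(P)$, the addition theorem gives $K_\kappa(Q,P)=\sum_{\ell<\kappa}\frac{2\ell+1}{4\pi}P_\ell(\cos\overrightarrow{QP})$, so $K_\kappa$ depends only on the great circle distance and therefore satisfies $K_\kappa(Q,g^{-1}P)=K_\kappa(gQ,P)$; a change of variables in the integral, using rotation invariance of the surface measure, then yields the identity. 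With these in hand I apply the covariance clause of Definition \ref{def:irf_spatio_temporal} to the measures $\lambda_{Q_i}^s$ and $\lambda_{\Delta_{h_i,t_i}^d}^t$, and the two identities convert the measure-level rotation $g$ and shift $\tau_c$ into a rotation of the spatial arguments $Q_i\mapsto gQ_i$ and a time shift $t_i\mapsto t_i+c$; this is exactly homogeneity and stationarity of $\Delta_h^d X_\kappa$. The mean clause is handled the same way.

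For the converse I would argue through the spectral representation \eqref{eq:spectral_irf(k,d)}. For any $\lambda^s\in\Lambda_\kappa^s$ one has $X(\lambda^s,t)=\sum_{\ell\ge\kappa}\sum_m Z_{\ell,m}(t)\,Y_\ell^m(\lambda^s)$, since $Y_\ell^m(\lambda^s)=0$ for $\ell<\kappa$, so every allowable spatial functional sees only the truncated coefficients. Homogeneity of $\Delta_h^d X_\kappa$ forces, via Roy's argument, the differenced coefficients $\Delta_h^d Z_{\ell,m}$ to be mutually uncorrelated across $(\ell,m)$, while temporal stationarity makes each $\Delta_h^d Z_{\ell,m}$ stationary, i.e. each $Z_{\ell,m}$ is a random process with stationary increments of order $d$ (Remark \ref{rmk:irf_rpsi}). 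Expanding $Cov\bigl(X(\lambda_1^s,\lambda_1^t),X(\lambda_2^s,\lambda_2^t)\bigr)$ then factors into spatial sums $\sum_m Y_\ell^m(\lambda_1^s)Y_\ell^m(\lambda_2^s)$ weighted by temporal covariances $Cov\bigl(Z_{\ell,m}(\lambda_1^t),Z_{\ell,m}(\lambda_2^t)\bigr)$. The temporal factors are invariant under a common shift because $\lambda_1^t,\lambda_2^t\in\Lambda_d^t$ act on a stationary-increment process (the structure function of Remark \ref{rmk:coef_spectral}), and the spatial factors are rotation invariant because $\sum_m Y_\ell^m(g\lambda_1^s)Y_\ell^m(g\lambda_2^s)=\sum_m Y_\ell^m(\lambda_1^s)Y_\ell^m(\lambda_2^s)$ by the orthogonality of the Wigner rotation matrices, which is the measure-valued form of the addition theorem. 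Combining the two invariances recovers the full IRF($\kappa,d$) covariance identity for arbitrary allowable measures, and the mean follows analogously.

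I expect the main obstacle to be the converse, and within it the passage from the two \emph{specific} generating measures to \emph{arbitrary} allowable measures: one must verify that rotation and shift invariance survive after integrating against a general $\lambda^s\in\Lambda_\kappa^s$ and $\lambda^t\in\Lambda_d^t$, not merely against the truncation and differencing measures used in the forward direction. The spectral factorization above is what makes this tractable, reducing the spatial dependence to the single rotation-covariant kernel $\sum_m Y_\ell^m(\lambda_1^s)Y_\ell^m(\lambda_2^s)$ and the temporal dependence to the stationary-increment structure function; the remaining care is in justifying the interchange of the infinite spectral sum with the measure integrations, which rests on the finite-second-moment hypothesis and convergence in quadratic mean.
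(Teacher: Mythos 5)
Your proposal is correct, and its forward direction is essentially the paper's: both realize the truncated, differenced process as $X(\lambda_Q^s,\lambda_{\Delta_{h,t}^d}^t)$ using the measures of Remark \ref{rmk:truncation_measure} and Remark \ref{rmk:allowable_measure_delta}, and then read off invariance from Definition \ref{def:irf_spatio_temporal}. You are in fact more careful on one point: the paper passes from $X(\lambda_Q^s,\lambda_{\Delta_{h,t}^d}^t)$ to $X(g\lambda_Q^s,\tau\lambda_{\Delta_{h,t}^d}^t)=\Delta_h^d X_\kappa(gQ,\tau s)$ with only a ``likewise,'' whereas you prove the needed transport identities $g\lambda_Q^s=\lambda_{gQ}^s$ (via the addition theorem) and $\tau_c\lambda_{\Delta_{h,t}^d}^t=\lambda_{\Delta_{h,t+c}^d}^t$. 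The converse is where you genuinely diverge. The paper expands $X(\lambda^s,\lambda^t)$, annihilates the sub-$\kappa$ harmonics, and then pushes the rotation and shift inside the integrals through a chain of Fubini interchanges, invoking ``the homogeneity of $X_\kappa(P,t)$''---a property not literally contained in the hypothesis, which concerns only the differenced process $\Delta_h^d X_\kappa$---together with Remark \ref{rmk:irf_rpsi}. You instead work at the coefficient level: Roy's theorem applied to the homogeneous and stationary process $\Delta_h^d X_\kappa$ gives uncorrelatedness of the $\Delta_h^d Z_{\ell,m}$ across $(\ell,m)$, Remark \ref{rmk:irf_rpsi} upgrades each $Z_{\ell,m}$ to an IRF($d$), and the covariance factorizes into rotation-invariant spatial kernels $\sum_m Y_\ell^m(\lambda_1^s)Y_\ell^m(\lambda_2^s)$ (Wigner orthogonality) times shift-invariant temporal covariances. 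Your route uses only the hypothesis as stated and isolates exactly where spatial and temporal invariance come from, at the price of two facts you should make explicit: (i) the temporal factor $Cov\bigl(Z_{\ell,m}(\lambda_1^t),Z_{\ell,m}(\lambda_2^t)\bigr)$ is independent of $m$ (again Roy's theorem applied to the differenced process), which is what allows you to pull it out of the $m$-sum; and (ii) the interchange of the infinite spectral sum with the measure integrations, which---as in the paper's proof---rests on quadratic-mean continuity and the finiteness of the measures.
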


\bigskip

\begin{proof}
($\Rightarrow$) Suppose that a spatio-temporal random process $X(\cdot,\cdot)$ on $\mathbb{S}^2$ is an IRF($\kappa,d$). Denote allowable measures 
\begin{align*}
&\lambda_{Q}^s(dP) = \delta_Q(dP) - \sum_{\ell<\kappa} \sum_{m=-\ell}^{\ell} Y_{\ell}^{m}(Q) Y_\ell^m(dP)\\
&\lambda_{\Delta_{h,s}^d}^t = \sum_{k=0}^{d} (-1)^k \binom{d}{k} \delta_{s - kh}, \quad s,h \in \mathbb{R}\\
\text{where } &P, Q \in \mathbb{S}^2; \quad s,h \in \mathbb{R}; \quad \delta_Q(dP) \quad  \text{and} \quad \delta_{s-kh} \text{ are the Dirac measures on $\mathbb{S}^2$ and $\mathbb{R}$, respectively.}\\
\end{align*}

By Remark \ref{rmk:truncation_measure} and Remark \ref{rmk:allowable_measure_delta}, $\lambda_{Q}^s \in \Lambda_{\kappa}^s$ and $\lambda_{\Delta_{h,s}^d}^t \in \Lambda_{d}^t$. Therefore,
\begin{align*}
X(\lambda_{Q}^s,\lambda_{t_s}) &= \int_{\mathbb{R}} \int_{\mathbb{S}^2} X(P,t) \lambda_{Q}^s(dP) \lambda_{t_s}(dt)\\
&=\int_{\mathbb{R}} \int_{\mathbb{S}^2} \sum_{\ell=0}^{\infty} \sum_{m=-\ell}^{\ell} Z_{\ell,m}(t) Y_\ell^m(P) \lambda_{Q}^s(dP) \lambda_{t_s}(dt)\\
&\text{Since $\lambda_s$ and $\lambda_t$ are finite measures and $Z_{\ell,m}(t)$ is integrable,}\\
&\text{by Fubini's theorem,}\\
&= \sum_{\ell=0}^{\infty} \sum_{m=-\ell}^{\ell} \int_{\mathbb{R}} Z_{\ell,m}(t) \lambda_{t_s}(dt) \int_{\mathbb{S}^2} Y_\ell^m(P) \lambda_{Q}^s(dP)\\
&\text{By the definitions of $\lambda_{Q}^s$ and Remark \ref{rmk:truncation_measure},}\\
&= \sum_{\ell=\kappa}^{\infty} \sum_{m=-\ell}^{\ell} Y_\ell^m(Q) \int_{\mathbb{R}} Z_{\ell,m}(t) \lambda_{t_s}(dt) \\
&\text{By the definitions of $\lambda_{t_s}$ and Remark \ref{rmk:allowable_measure_delta},}\\
&= \sum_{\ell=\kappa}^{\infty} \sum_{m=-\ell}^{\ell} Y_\ell^m(Q) \Delta_{h}^{d} Z_{\ell,m}(s)\\
&= \Delta_{h}^{d} X_\kappa(Q,s)\\
\end{align*}

Likewise,\\
$$X(g \lambda_{Q}^s, \tau \lambda_{t_s}) = \Delta_{h}^{d} X_\kappa(g Q, \tau s)$$

By the definition of IRF($\kappa$,$d$),\\
$$E\biggl( X(\lambda_{Q}^s, \lambda_{t_s}) \biggl) = E\biggl( X(g\lambda_{Q}^s, \tau \lambda_{t_s}) \biggl)$$
$$Cov\biggl( X(\lambda_{s_P}, \lambda_{t_t}) ,X(\lambda_{Q}^s, \lambda_{t_s}) \biggl) = Cov\biggl( X(g \lambda_{s_P}, g \lambda_{t_t}) ,X(g\lambda_{Q}^s, \tau \lambda_{t_s}) \biggl)$$

Therefore,
$$E\biggl( \Delta_{h}^{d} X_\kappa(Q, s) \biggl) = E\biggl( \Delta_{h}^{d} X_\kappa(g Q, \tau s) \biggl)$$
$$Cov\biggl( \Delta_{h}^{d} X_\kappa(P, t), \Delta_{h}^{d} X_\kappa(Q, s) \biggl) = Cov\biggl( \Delta_{h}^{d} X_\kappa(g P, \tau t), \Delta_{h}^{d} X_\kappa(g Q, \tau s) \biggl)$$

That is, $\Delta_{h}^{d} X_\kappa(Q, s)$ is spatially homogeneous and temporally stationary.

\bigskip

($\Leftarrow$) Suppose $X(P,t)$ is a spatio-temporal random process on the sphere, and its truncated and differenced process $\Delta_h^d X_\kappa(P,t)$ is homogeneous and stationary. We can have spectral representations such that
$$\Delta_h^d X_\kappa(P,t) = \sum_{\ell=\kappa}^{\infty} \sum_{m=-\ell}^{\ell} \Delta_h^d Z_{\ell,m}(t) Y_\ell^m(P)$$
$$X(P,t) = \sum_{\ell=0}^{\infty} \sum_{m=-\ell}^{\ell} Z_{\ell,m}(t) Y_\ell^m(P)$$
Then, for any $\lambda^s \in \Lambda_{\kappa}^s$ and $\lambda^t \in \Lambda_d^t$,

\begin{align*}
    &E \biggl( X(\lambda^s, \lambda^t) \biggl) = E \int_\mathbb{R} \int_{\mathbb{S}^2} X(P,t) \lambda^s(dP) \lambda^t(dt)\\
    &=E \int_\mathbb{R} \int_{\mathbb{S}^2} \biggl\{ \sum_{\ell=0}^{\kappa-1} \sum_{m=-\ell}^{\ell} Z_{\ell,m}(t) Y_\ell^m(P) \biggl\} \lambda^s(dP) \lambda^t(dt) + E \int_\mathbb{R} \int_{\mathbb{S}^2} X_\kappa(P,t) \lambda^s(dP) \lambda^t(dt)\\
    &\text{Since $\lambda^s \in \Lambda_{\kappa}^s$,}\\
    &=E \int_\mathbb{R} \int_{\mathbb{S}^2} X_\kappa(P,t) \lambda^s(dP) \lambda^t(dt)\\
    &\text{Since $\lambda^s$ and $\lambda^t$ are finite measures and $X(P,t)$ is continuous in quadratic mean, by Fubini's theorem,}\\
    &= \int_\mathbb{R} \int_{\mathbb{S}^2} E\biggl( X_\kappa(P,t) \biggl) \lambda^s(dP) \lambda^t(dt)\\
    &\text{By the homogeneity of $X_\kappa(P,t)$,}\\
    &= \int_\mathbb{R} \int_{\mathbb{S}^2} E\biggl( X_\kappa(gP,t) \biggl) \lambda^s(dP) \lambda^t(dt)\\
    &\text{By Fubini's theorem again,}\\
    &= \int_{\mathbb{S}^2} E \biggl( \int_{\mathbb{R}} X_\kappa(gP,t) \lambda^t(dt) \biggl) \lambda^s(dP) = \int_{\mathbb{S}^2} E \biggl( \sum_i \lambda_i X_\kappa(gP,t_i) \biggl) \lambda^s(dP)\\
    &= \int_{\mathbb{S}^2} E \biggl( \sum_i \lambda_i \sum_{\ell=\kappa}^{\infty} \sum_{m=-\ell}^{\ell} Z_{\ell,m}(t_i) Y_\ell^m(gP) \biggl) \lambda^s(dP) = \int_{\mathbb{S}^2} \sum_{\ell=\kappa}^{\infty} \sum_{m=-\ell}^{\ell} Y_\ell^m(gP) E \biggl( \sum_i \lambda_i Z_{\ell,m}(t_i) \biggl) \lambda^s(dP)\\
    &= \int_{\mathbb{S}^2} \sum_{\ell=\kappa}^{\infty} \sum_{m=-\ell}^{\ell} Y_\ell^m(gP) E \biggl( Z_{\ell,m}(\lambda^t) \biggl) \lambda^s(dP)\\
    &\text{By Remark \ref{rmk:irf_rpsi}, given $\ell$ and $m$, a random process with stationary increments, $Z_{\ell,m}(t)$ is an IRF($d$) with respect to $t\in \mathbb{R}$.}\\
    &\text{Therefore, $E\biggl( Z_{\ell,m}(\lambda^t) \biggl)$ = $E \biggl(Z_{\ell,m}(\tau \lambda^t) \biggl)$ for any allowable $\lambda^t \in \Lambda_{d}^t$ and any shift operator $\tau$. Hence,}\\ 
    &= \int_{\mathbb{S}^2} \sum_{\ell=\kappa}^{\infty} \sum_{m=-\ell}^{\ell} Y_\ell^m(gP) E \biggl( Z_{\ell,m}(\tau \lambda^t) \biggl) \lambda^s(dP) = \int_{\mathbb{S}^2} \sum_{\ell=\kappa}^{\infty} \sum_{m=-\ell}^{\ell} Y_\ell^m(gP) E \biggl( \sum_i \lambda_i Z_{\ell,m}(\tau t_i) \biggl) \lambda^s(dP)\\
    &= \int_{\mathbb{S}^2} \sum_{\ell=\kappa}^{\infty} \sum_{m=-\ell}^{\ell} Y_\ell^m(gP) E \biggl( \int_{\mathbb{R}} Z_{\ell,m}(\tau t) \lambda^t(dt) \biggl) \lambda^s(dP)\\
    &\text{By Fubini's theorem,}\\
    &= E \biggl(  \int_{\mathbb{R}} \int_{\mathbb{S}^2} \sum_{\ell=\kappa}^{\infty} \sum_{m=-\ell}^{\ell} Y_\ell^m(gP) Z_{\ell,m}(\tau t) \lambda^s(dP) \lambda^t(dt) \biggl) = E \biggl(  \int_{\mathbb{R}} \int_{\mathbb{S}^2} \sum_{\ell=0}^{\infty} \sum_{m=-\ell}^{\ell} Y_\ell^m(gP) Z_{\ell,m}(\tau t) \lambda^s(dP) \lambda^t(dt) \biggl)\\
    &= E \biggl( \int_{\mathbb{R}} \int_{\mathbb{S}^2} X(g P, \tau t)  \lambda^s(dP) \lambda^t(dt) \biggl) = E \biggl(X(g \lambda^s, \tau \lambda^t) \biggl)
\end{align*}

As a result, $E \biggl( X(\lambda^s, \lambda^t) \biggl) = E \biggl(X(g \lambda^s, \tau \lambda^t) \biggl)$.
In the same manner, we can show that
$$Cov \biggl( X(\lambda_1^s, \lambda_1^t) \biggl) = Cov \biggl(X(g \lambda_2^s, \tau \lambda_2^t) \biggl)$$
for any $\lambda_1^s, \lambda_2^s \in \Lambda_{\kappa}^s$ and $\lambda_1^t, \lambda_2^t \in \Lambda_d^t$. 
Thus, $X(P,t)$ is an IRF($\kappa$,d).
\end{proof}

\bigskip

Now, suppose that $X(\cdot,\cdot)$ is an IRF($\kappa,d$). Then, for the allowable measures $\lambda_{Q}^s \in \Lambda_{\kappa}^s$ and $\lambda_{\Delta_{1,s}^d}^t \in \Lambda_{d}^t$, its intrinsic covariance function (ICF($\kappa,d$)) is:

\begin{align*}
&\phi_{\kappa,d} \biggl(\overrightarrow{PQ}, h \biggl) := Cov \biggl( X(\lambda_{P}^s, \lambda_{\Delta_{1,t}^d}^t), \quad X(\lambda_{Q}^s, \lambda_{\Delta_{1,s}^d}^t) \biggl) \\
&= Cov \biggl( \Delta_1^d X_\kappa(P,t), \quad \Delta_1^d X_\kappa(Q,s) \biggl) \\
&= \sum_{\ell=\kappa}^\infty \frac{2\ell+1}{4\pi} a_{\ell}(h) P_{\ell}(\cos{\overrightarrow{PQ}}) \\
&= \sum_{\ell=0}^\infty \frac{2\ell+1}{4\pi} a_{\ell}(h) P_{\ell}(\cos{\overrightarrow{PQ}}) -  \sum_{\ell=0}^{\kappa-1} \frac{2\ell+1}{4\pi} a_{\ell}(h) P_{\ell}(\cos{\overrightarrow{PQ}}) \\
&= \phi_{0}\biggl(\overrightarrow{PQ}, h \biggl) -  \sum_{\ell=0}^{\kappa-1} \frac{2\ell+1}{4\pi} a_{\ell}(h) P_{\ell}(\cos{\overrightarrow{PQ}})
\end{align*}
By adding a scale parameter $\gamma_0$, this yields
\begin{align}
&\phi_{\kappa,d} \biggl(\overrightarrow{PQ}, h \biggl) = \gamma_0 \biggl[ \phi_{0}\biggl(\overrightarrow{PQ}, h \biggl) -  \sum_{\ell=0}^{\kappa-1} \frac{2\ell+1}{4\pi} a_{\ell}(h) P_{\ell}(\cos{\overrightarrow{PQ}}) \biggl] \label{eq:icsf}
\end{align}
where $a_\ell(h)$ is a stationary time series covariance function and $\phi_0(\cdot,\cdot)$ is a homogeneous and stationary spatio-temporal covariance function. Therefore, covariance functions presented in Table \ref{tab1} can be used for the structure of $\phi_0(\cdot,\cdot)$.

\bigskip

\subsection{Covariance Functions of an IRF($\kappa, d$) on the Sphere} \label{subsec:cov_irf_kappa_d}

Huang et al. (2019) \cite{HUANG20197} proposed that a covariance function of an intrinsic random function on the sphere can be derived by applying a reproducing kernel Hilbert space suggested by Leversely (1999) \cite{Levesley1999}. To introduce an appropriate covariance model for an IRF($\kappa, d$) on the sphere, we can also utilize the structure of this spatially non-homogeneous covariance function, along with the covariance function of non-stationary random processes, as discussed in Remark \ref{rmk:coef_spectral}. Therefore, by introducing the new basis of the nil space, $q_1(\cdot), q_2(\cdot), \dots, q_{\kappa^2}(\cdot) \in N = \text{span}\{ Y_\ell^m(P): \quad 0 \le \ell < \kappa, \quad -\ell \le m \le \ell \}$ and a set of distinct points $\{\tau_1, \tau_2, \dots, \tau_{\kappa^2}\} \in \mathbb{S}^2$  such that $q_\nu(\tau_\mu) = I(\nu, \mu)$ for $1 \le \nu, \mu \le \kappa^2$, we can get the following non-homogeneous and non-stationary covariance function of $IRF(\kappa,d)$ on the sphere:\\

\begin{align}
&R(P,Q,t,s) := Cov\biggl(X(P,t), X(Q,s)\biggl) \notag\\
&= \gamma_0^2 \phi_{\kappa} \biggl(\overrightarrow{PQ}, (t,s) \biggl) + \sum_{\nu=1}^{\kappa^2} \sum_{\mu=1}^{\kappa^2} \gamma_\nu \gamma_\mu \phi_{\kappa} \biggl(\overrightarrow{\tau_{\nu}\tau_{\mu}}, (t,s) \biggl) q_{\nu}(P) q_{\mu}(Q) + \sum_{\nu=1}^{\kappa^2} {\gamma^2_\nu}  q_{\nu}(P) q_{\nu}(Q) \notag\\
&- \sum_{\nu=1}^{\kappa^2} \gamma_0 \gamma_\nu \phi_{\kappa} \biggl(\overrightarrow{Q\tau_{\nu}}, (t,s) \biggl) q_{\nu}(P) - \sum_{\nu=1}^{\kappa^2} \gamma_0 \gamma_\nu \phi_{\kappa} \biggl(\overrightarrow{P\tau_{\nu}}, (t,s) \biggl) q_{\nu}(Q)  \label{eq:non-homogeneous_non-stationary_cov}\\
&\notag\\
&\text{where } \gamma_0, \gamma_\nu, \gamma_\mu \text{ are scale parameters, and} \notag\\
&\phi_{\kappa} \biggl(\overrightarrow{PQ}, (t,s) \biggl) = Cov\biggl(X_\kappa(P,t), X_\kappa(Q,s) \biggl) \notag\\ 
&= \sum_{\ell=\kappa}^{\infty} \frac{2\ell+1}{4\pi} P_\ell(\cos{\overrightarrow{PQ}}) \int_{-\infty}^{\infty} \biggl(e^{it\omega} -1 - it\omega - \cdots - \frac{(i t \omega)^{n-1}}{(n-1)!} \biggl) \biggl(e^{-is\omega} -1 + is\omega - \cdots - \frac{(-i s \omega)^{n-1}}{(n-1)!} \biggl) \frac{1}{\omega^{2d}} dF_\ell(\omega)\label{eq:phi_kappa}
\end{align}
where $P,Q \in \mathbb{S}^2, \quad t,s \in \mathbb{R}$, and $dF_\ell(\omega)$ is a non-decreasing and bounded measure presented in Remark \ref{rmk:coef_spectral}. 

By the Kolmogorov existence theorem, there exists a Gaussian random process having \eqref{eq:non-homogeneous_non-stationary_cov} as its covariance function; hence, we can derive a covariance function of an intrinsic random function of its order $(\kappa,d)$.

\bigskip

\begin{theorem} \label{thm:positive_definite}
The function $R(P,Q,t,s)$ in \eqref{eq:non-homogeneous_non-stationary_cov} is positive semidefinite.
\end{theorem}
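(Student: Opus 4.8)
The plan is to prove positive semidefiniteness constructively, by exhibiting a genuine second-order random process whose covariance function is \emph{exactly} $R(P,Q,t,s)$; since every covariance function is positive semidefinite, this settles the theorem. The remark preceding the statement---that by the Kolmogorov existence theorem a Gaussian process with covariance \eqref{eq:non-homogeneous_non-stationary_cov} exists---should be read as the \emph{consequence} of the theorem, not its proof, since Kolmogorov's theorem requires positive semidefiniteness as its hypothesis. Hence the positive semidefiniteness must be established first, and the cleanest route is to display the generating process directly.

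First I would record that the truncated process $X_\kappa(P,t)=\sum_{\ell\ge\kappa}\sum_{m=-\ell}^{\ell} Z_{\ell,m}(t)Y_\ell^m(P)$ is itself a well-defined second-order random field whose covariance is $\phi_\kappa(\overrightarrow{PQ},(t,s))$ of \eqref{eq:phi_kappa}; in particular $\phi_\kappa$ is positive semidefinite. This follows from the addition theorem $\frac{2\ell+1}{4\pi}P_\ell(\cos\overrightarrow{PQ})=\sum_{m=-\ell}^{\ell}Y_\ell^m(P)Y_\ell^m(Q)$, which writes $\phi_\kappa$ as a sum over $\ell\ge\kappa$ of products of the spatial kernel $\sum_m Y_\ell^m(P)Y_\ell^m(Q)$ with the inner $\omega$-integral appearing in \eqref{eq:phi_kappa}. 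The former is an inner-product kernel and the latter equals $Cov(Z_{\ell,m}(t),Z_{\ell,m}(s))$ by the spectral representation \eqref{eq:spectral_i(d)}, so both factors are positive semidefinite and, by the Schur product theorem, so is each summand; summing over $\ell$ preserves this. Hence $X_\kappa$ exists and may be taken as given.

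Next I would introduce auxiliary random variables $\epsilon_1,\dots,\epsilon_{\kappa^2}$ that are uncorrelated, mean zero, variance one, and independent of $X_\kappa$, and define the space-time process
$$Y(P,t) := \gamma_0\, X_\kappa(P,t) - \sum_{\nu=1}^{\kappa^2}\gamma_\nu\, X_\kappa(\tau_\nu,t)\,q_\nu(P) + \sum_{\nu=1}^{\kappa^2}\gamma_\nu\,\epsilon_\nu\, q_\nu(P).$$
Computing $Cov(Y(P,t),Y(Q,s))$ by bilinear expansion, independence of the $\epsilon_\nu$ from $X_\kappa$ annihilates all cross terms between the last sum and the first two; $Cov(\epsilon_\nu,\epsilon_\mu)=\mathbbm{1}(\nu,\mu)$ reproduces the diagonal term $\sum_\nu \gamma_\nu^2 q_\nu(P)q_\nu(Q)$; the $\gamma_0$-with-$\gamma_0$ pairing gives $\gamma_0^2\phi_\kappa(\overrightarrow{PQ},(t,s))$; the $\gamma_0$-with-$\gamma_\nu$ pairings give the two cross sums $-\sum_\nu\gamma_0\gamma_\nu\phi_\kappa(\overrightarrow{P\tau_\nu},(t,s))q_\nu(Q)$ and $-\sum_\nu\gamma_0\gamma_\nu\phi_\kappa(\overrightarrow{Q\tau_\nu},(t,s))q_\nu(P)$; and the $\gamma_\nu$-with-$\gamma_\mu$ pairing yields the double sum $\sum_{\nu,\mu}\gamma_\nu\gamma_\mu\phi_\kappa(\overrightarrow{\tau_\nu\tau_\mu},(t,s))q_\nu(P)q_\mu(Q)$. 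Matching these against \eqref{eq:non-homogeneous_non-stationary_cov} term by term, and using the symmetry $\phi_\kappa(\overrightarrow{\tau_\nu Q},(t,s))=\phi_\kappa(\overrightarrow{Q\tau_\nu},(t,s))$, shows $Cov(Y(P,t),Y(Q,s)) = R(P,Q,t,s)$ exactly.

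Positive semidefiniteness is then immediate: for any finite collection of space-time locations $(P_i,t_i)$ and real coefficients $c_i$,
$$\sum_{i,j} c_i\, c_j\, R(P_i,P_j,t_i,t_j) = \mathrm{Var}\Bigl(\textstyle\sum_i c_i\, Y(P_i,t_i)\Bigr) \ge 0.$$
I expect the main obstacle to be neither the bilinear algebra nor the final inequality, but correctly guessing the generating process $Y$---in particular, recognizing that the diagonal term $\sum_\nu\gamma_\nu^2 q_\nu(P)q_\nu(Q)$ cannot arise from $X_\kappa$ and must be supplied by the independent noise $\epsilon_\nu$, and that the scale parameters attach as $\gamma_0$ to the field term and $\gamma_\nu$ to the $\tau_\nu$-correction precisely so that the cross terms carry the correct $\gamma_0\gamma_\nu$ weights. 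Once $Y$ is identified, the verification is routine and, notably, does not even invoke the cardinal property $q_\nu(\tau_\mu)=\mathbbm{1}(\nu,\mu)$.
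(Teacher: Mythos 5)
Your proposal is correct, and it reaches the conclusion by a genuinely different route than the paper. The paper's proof is a direct algebraic verification: it expands $R$ in spherical harmonics and the spectral representation of $Z_{\ell,m}$, then shows that the quadratic form $\sum_{i_1,i_2} c_{i_1}\bar{c}_{i_2}R(x_{i_1},x_{i_2},t_{i_1},t_{i_2})$ collapses into a sum of squared moduli,
\begin{align*}
\sum_{\ell\ge\kappa}\sum_{m}\int dF_\ell(\omega)\Biggl|\sum_{i} c_{i}\bigl(\cdots\bigr)\tfrac{1}{(i\omega)^d}\Bigl\{\gamma_0 Y_\ell^m(x_{i})-\sum_{\nu}\gamma_\nu Y_\ell^m(\tau_\nu)q_\nu(x_{i})\Bigr\}\Biggr|^2
+\sum_{\nu}\Biggl|\gamma_\nu\sum_{i}c_{i}q_\nu(x_{i})\Biggr|^2\ \ge 0.
\end{align*}
Your proof operates at the process level instead: you build $Y(P,t)=\gamma_0 X_\kappa(P,t)-\sum_\nu\gamma_\nu X_\kappa(\tau_\nu,t)q_\nu(P)+\sum_\nu\gamma_\nu\epsilon_\nu q_\nu(P)$, check by bilinear expansion that its covariance is exactly $R$, and conclude from $\mathrm{Var}\bigl(\sum_i c_i Y(P_i,t_i)\bigr)\ge 0$. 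The two arguments are dual to one another: the paper's squared term is precisely the spectral-domain image of your detrended field $\gamma_0 X_\kappa-\sum_\nu\gamma_\nu X_\kappa(\tau_\nu,\cdot)q_\nu$, and its leftover sum $\sum_\nu\bigl|\gamma_\nu\sum_i c_i q_\nu(x_i)\bigr|^2$ is the variance contributed by your independent noise $\epsilon_\nu$ --- the same ingredient you correctly identified as the only possible source of the diagonal term $\sum_\nu\gamma_\nu^2 q_\nu(P)q_\nu(Q)$. What each buys: your construction is cleaner, explains \emph{where} $R$ comes from (a nil-space-corrected field plus independent nil-space noise), and fixes the logical order around the Kolmogorov remark --- you are right that Kolmogorov's theorem takes positive semidefiniteness as a hypothesis, so the paper's preceding sentence is a consequence of the theorem, not an argument for it; the paper's computation, in exchange, is self-contained algebra requiring no enlargement of the probability space and no auxiliary random variables. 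Two small caveats in your write-up, neither fatal: your argument (like the paper's) implicitly needs $X_\kappa$ to be a genuine second-order field, i.e.\ quadratic-mean convergence of the truncated series, which requires the summability of $\sum_\ell\frac{2\ell+1}{4\pi}$ times the temporal diagonal that both proofs tacitly assume; and since you verify the quadratic form only for real coefficients, you should note that symmetry $R(P,Q,t,s)=R(Q,P,s,t)$ upgrades real positive semidefiniteness to the Hermitian version the paper states.
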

\begin{proof}
\begin{align*}
&\text{We can rewrite  \eqref{eq:non-homogeneous_non-stationary_cov} as :}\\
&Cov\biggl(X(P,t), X(Q,s)\biggl)\\
=& \sum_{\ell=\kappa}^{\infty} \sum_{m=-\ell}^{\ell} \int_{-\infty}^{\infty} \biggl(e^{it\omega} -1 - it\omega - \cdots - \frac{(i t \omega)^{n-1}}{(n-1)!} \biggl) \biggl(e^{-is\omega} -1 + is\omega - \cdots - \frac{(-i s \omega)^{n-1}}{(n-1)!} \biggl) \frac{1}{\omega^{2d}} dF_\ell(\omega)\\
&\biggl[ \gamma_0^2 Y_\ell^m(P) Y_\ell^m(Q) - \sum_{\nu=1}^{\kappa^2}  \gamma_0 \gamma_\nu Y_\ell^m(P) Y_\ell^m(\tau_\nu) q_\nu(Q) - \sum_{\nu=1}^{\kappa^2} \gamma_0 \gamma_\nu Y_\ell^m(\tau_nu) Y_\ell^m(Q) q_\nu(P)\\ 
&+ \sum_{\nu=1}^{\kappa^2} \sum_{\mu=1}^{\kappa^2} \gamma_\nu \gamma_\mu Y_\ell^m(\tau_\nu) Y_\ell^m(\tau_\mu) q_\nu(P) q_\mu(Q) \biggl] +  \sum_{\nu=1}^{\kappa^2}\gamma_\nu^2 q_{\nu}(P) q_{\nu}(Q) \\
\\
&\text{Therefore, for any $n \in \mathbb{N}$,  $c_{i_1},c_{i_2} \in \mathbb{C}$,  $x_{i_1}, x_{i_2} \in \mathbb{S}^2$ and $t_{i_1},t_{i_2} \in \mathbb{R}$ where $i_1,i_2=1,2,3 \dots$,}\\
&\sum_{i_1=1}^{n} \sum_{i_2=1}^{n} c_{i_1} \bar{c}_{i_2} Cov\biggl(X(x_{i_1},t_{i_1}), X(x_{i_2},t_{i_2})\biggl) \\
=& \sum_{\ell=\kappa}^{\infty} \sum_{m=-\ell}^{\ell} \int_{-\infty}^{\infty} dF_{\ell}(\omega) \Biggl| \sum_{i_1=1}^n c_{i_1} \biggl(e^{i t_{i_1} \omega} -1 - i t_{i_1} \omega - \cdots - \frac{(i t_{i_1} \omega)^{n-1}}{(n-1)!} \biggl) \frac{1}{(i \omega)^{d}}  \biggl\{ \gamma_0 Y_\ell^m(x_{i_1}) - \sum_{\nu=1}^{\kappa^2} \gamma_{\nu} Y_\ell^m(\tau_\nu) q_\nu(x_{i_1}) \biggl\} \Biggl|^2\\
&+ \sum_{\nu=1}^{\kappa^2} \Biggl| \gamma_{\nu} \sum_{i_1=1}^n c_{i_1} q_\nu(x_{i_1}) \Biggl|^2  \ge 0\\
\\
&\text{where $F_\ell(\omega)$ is a non-negative measure.}\\
\end{align*}
\end{proof}

\bigskip

\section{Data Analysis} \label{sec:data_analysis}

In this section, a methodology of the parameter estimation is demonstrated for the suggested intrinsic covariance functions through numerical optimization. This will be illustrated by the simulation study and analysis of a real world data set about global temperature anomaly, which was collected by The National Space Science and Technology Center (NSSTC, located at the University of Alabama, Huntsville). 

\subsection{Simulation Study} \label{subsec:simulation}

\subsubsection{Simulation Design} \label{subsubsec:simuldesign}

For the simulation, IRF(1,1) with the model of generating function of Legendre polynomials in Table \ref{tab1} is tried. The procedure for simulating the parameter estimation of the intrinsic covariance function of an IRF(1,1) is as follows:
\bigskip
  \begin{enumerate}
  \item Generate $X(\cdot, \cdot)$, an IRF(1,1), from multivariate normal MVN(0, R($P,Q,t,s$)). The non-homogeneous and non-stationary covariance function R($P,Q,t,s$) in \eqref{eq:non-homogeneous_non-stationary_cov} is
  
  \begin{align}
  R(P,Q,t,s) &= \gamma_0^2 \phi_{1} \biggl(\overrightarrow{PQ}, (t,s) \biggl) + \gamma_1^2 \phi_{1} \biggl(0, (t,s) \biggl) + {\gamma^2_1} \notag\\
  &- \gamma_0 \gamma_1 \phi_{1} \biggl(\overrightarrow{Q\tau}, (t,s) \biggl) - \gamma_0 \gamma_1 \phi_{1} \biggl(\overrightarrow{P\tau}, (t,s) \biggl) \label{eq:eq:covariance_irf(1,1)}
  \end{align}
 \begin{align*}
&\text{where } \gamma_0 \text{ and } \gamma_1 \text{ are scale parameters},\\
&\phi_{1} \biggl(\overrightarrow{PQ}, (t,s) \biggl) = Cov\biggl(X_1(P,t), X_1(Q,s) \biggl)\\ 
&= \sum_{\ell=1}^{\infty} \frac{2\ell+1}{4\pi} P_\ell(\cos{\overrightarrow{PQ}}) \int_{-\infty}^{\infty} (e^{i\omega t} - 1)(e^{-i\omega s} - 1)dG_\ell(\omega) - \frac{1}{4\pi}, \quad P,Q \in \mathbb{S}^2, \quad t,s \in \mathbb{R} \\
&\text{where } dG_\ell(\omega) = \frac{1}{(i\omega)^{2d}}dF_\ell(\omega) \text{, which is a non-decreasing and bounded measure.}
\end{align*}
  
  \item Compute the homogeneous and stationary random process $\Delta_1 X_1(\cdot, \cdot)$ by truncating and differencing from the $X(P,t)$ generated from the step 1.\\
  
  \item Estimate the parameters of an intrinsic covariance structure function in \eqref{eq:icsf} through Method of Moments (MoM) estimates and a numerical optimization with respect to the sum of squares between their theoretical values and MoM estimates. For the numerical optimization, the quasi-Newton method is used through the R function \textit{nlminb} (Gay 1990) \cite{nlminb}. For this simulation setting, the model of generating functions of Legendre polynomials in Table \ref{tab1} is tried. For the stationary covaraince function $a_\ell(h)$ in Remark \ref{rmk:parametric_models}, $a_\ell(h)=\alpha^\ell e^{-\ell \beta |h|}$ is used. Thus, the intrinsic covariance function is given as
  \begin{align}
  \phi_{1,1} \biggl(\overrightarrow{PQ}, h \biggl) &= \gamma_0 \biggl[ \phi_{0}\biggl(\overrightarrow{PQ}, h \biggl) - \frac{1}{4\pi} \biggl] \label{eq:icf_11}
  \end{align}
    where $\phi_{0} \biggl(\psi, h \biggl) = \frac{1}{4\pi}\frac{\gamma (1 - \alpha^2 g^2(h))}{(1-2 \alpha g(h) \cos{\psi} + \alpha^2 g^2(h))^{3/2}}$ and $\alpha g(h) =  \alpha e^{-\beta |h|}$.
\end{enumerate}

\bigskip

To get the MoM estimates, let $\Delta_1 X_1 (P,t)$ be a spatio-temporal process and assume $\mu=0$. Then, its MoM estimator is (Bussberg 2020) \cite{bussberg2020environmental}:\\
\begin{eqnarray}
\hat{\phi}_{1,1}(\psi_i, h_j) = \frac{1}{\lvert N_{\psi_i, h_j} \lvert}\sum_{\{(P,t),(Q,s)\} \in N_{\psi_i, h_j}} \Delta_1 X_1(P,t) \Delta_1 X_1(Q,s) \label{eq:mom}
\end{eqnarray} 

where  $N_{\psi_i, h_j} = \biggl\{ \{(P,t),(Q,s)\} \quad | \quad \psi_i - \epsilon \le \psi(P,Q) \le \psi_i+\epsilon, \quad \lvert t-s \lvert = \lvert h_j \lvert \biggl\}$. Here, $\psi(P,Q)$ denotes the great circle distance between $P$ and $Q$ for $P,Q \in \mathbb{S}^2$.

\bigskip

We want to find parameter estimates that minimize the sum of squares:
\begin{align*}
    L(p, \gamma_0) =  \sum_{i=1}^{k_1} \sum_{j=1}^{k_2} \left[\hat{\phi}_{1,1}(\psi_i, h_j) - \phi_{1,1}(\psi_i, h_j, p, \gamma_0)\right]^2.
\end{align*}
where $p$ is the parameters of $a_\ell(\cdot)$ and $\gamma_0$ is the scale parameter in \eqref{eq:icsf}.

That is, we want $\hat{p}$ and $\hat{\gamma_0}$ that minimize the above loss function. The structure of the intrinsic covariance function $\phi_{1,1}(\psi_i, h_j, p, \gamma_0)$ is given in \eqref{eq:icf_11}. To parameterize covariance models, we need to introduce proper intrinsic covariance functions $\phi_{\kappa,d}(\cdot,\cdot)$ and a temporal covariance function $a_\ell(\cdot)$ satisfying the condition of the form $\alpha^\ell g^\ell(\cdot)$ in Remark \ref{rmk:parametric_models}. For instance, if we use $a_\ell(h) = \alpha^\ell e^{-\beta \ell |h|}$, which exponentially decays with $h$, then the parameter of the intrinsic covariance function we have to estimate would be $p=(\alpha, \beta)$. 

\bigskip

\subsubsection{Preliminary Study} \label{subsubsec:preliminary}

Before creating a simulation dataset, we conduct a preliminary study to understand the role of each parameter $\alpha$ and $\beta$ for $a_\ell(h)=\alpha^\ell e^{-\beta \ell |h|}$ in the covariance function \eqref{eq:icf_11}. In Figure \ref{fig:preliminary_parameters}, We can see correlations are decreasing as both great circle distances and temporal lags increase as desired. Figure \ref{fig:preliminary_parameters} also explains how $\alpha$ and $\beta$ control decay rates of the covariance. The upper two plots compare covariance functions having different parameter $\alpha$ values when $\beta$ is fixed as 0.1. The left one shows how covariance values change by time lag with a fixed great circle distance $\psi=0$; likewise, the right one does for different great circle distances with a fixed value of time lag $h=0$. In the same sense, the two plots on the bottom provide a comparison of the covariance values among different parameter values of $\beta$ with fixed $\alpha=0.8$. From this visualization in Figure \ref{fig:preliminary_parameters}, we can conclude that as the values of the parameters $\alpha$ and $\beta$ increase, the covariance values decay more rapidly. This fact illustrates a very important property that we do not need to consider every observation for the computation of our estimate in \eqref{eq:mom}. In other words, when compute the estimate, we only need to consider pairs of a dataset whose spatial or temporal distances are close enough so that their correlations are significantly differently from zero. For example, as shown in Figure \ref{fig:preliminary_parameters}, when $\alpha=0.8$ and $\beta=0.3$ (bottom green), for pairs whose spherical distances are bigger than 1.5 and time lags are bigger than 4, their correlations are almost 0. Hence, when we compute the estimate, these pairs can be ignored. On the other hand, when $\alpha=0.3$ and $\beta=0.1$, correlations decay so slow that we cannot see they become zero even for pairs whose time lags and distances are great. Thus, we aim to use reasonable values for $\alpha$ and $\beta$ that cause correlations to decay at an appropriate rate—not too slowly, nor too quickly—for this simulation. The value of the scale parameter $\gamma$ is negligible for the simulation data, so we fixed $\gamma=1$.\\

\begin{figure}[h!]
\centering
  \includegraphics [width=12cm, height=10cm] {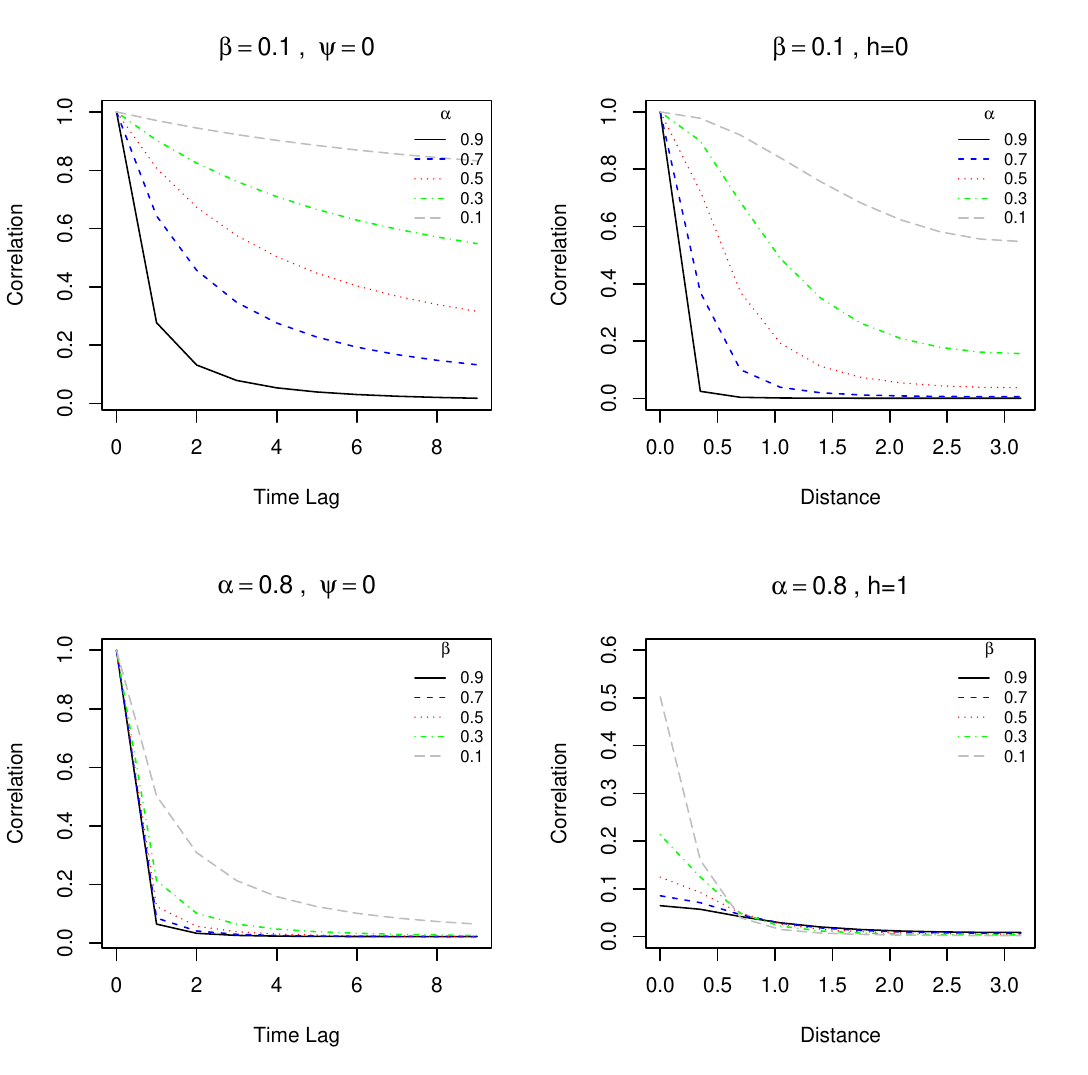}
  \caption{Correlations based on different parameter values. The model of generating functions of Legendre polynomials in Table \ref{tab:tab1} is used with $a_\ell(h) = \alpha^\ell g^\ell(h) =\alpha^\ell e^{-\ell \beta |h|}$.}
  \label{fig:preliminary_parameters}
\end{figure}

\bigskip

\subsubsection{Simulation Result}\label{subsubsec:sim_result}

In this simulation setting, each simulated dataset consists of 1,000 locations and 20 temporal points. In addition, we assign the scale parameter $\gamma_1=\frac{1}{2\sqrt{\pi}}$ in \eqref{eq:eq:covariance_irf(1,1)}. This is because the basis function of the nil space $q_\nu(\cdot)$ tends to be much larger than the values of spherical harmonics $Y_\ell^m(\cdot)$. For example, when $\kappa=1$, $q_1(\cdot)=1$, but $Y_0^0(\cdot)=\frac{1}{2\sqrt{\pi}} \approx 0.2821$. Thus, without adjusting this scale parameter, the non-homogeneous components of the random process would be overweighted. Figure \ref{fig:splot} and Figure \ref{fig:tplot} compare the MoM estimates, fitted values, and true values of the covariance, given that either time lags or great circle distances are fixed with $\alpha=0.80$, $\beta=0.10$, and $\gamma_0=1$. The "true value" refers to the actual values of the intrinsic covariance function computed with the true parameter values. The "fitted values" are obtained by plugging the parameter estimates into the intrinsic covariance function formula in \eqref{eq:icf_11}, while the MoM values come from \eqref{eq:mom}. From this comparison, we can conclude that our covariance estimates perform well, as the fitted values of the intrinsic covariance function are very close to the true values, regardless of the distances or lags. Furthermore, as expected, we observe that the values of the intrinsic covariance function (ICF) decay as time lags and great circle distances increase.

\begin{figure}[h!]
\centering
  \includegraphics [width=13cm, height=11cm] {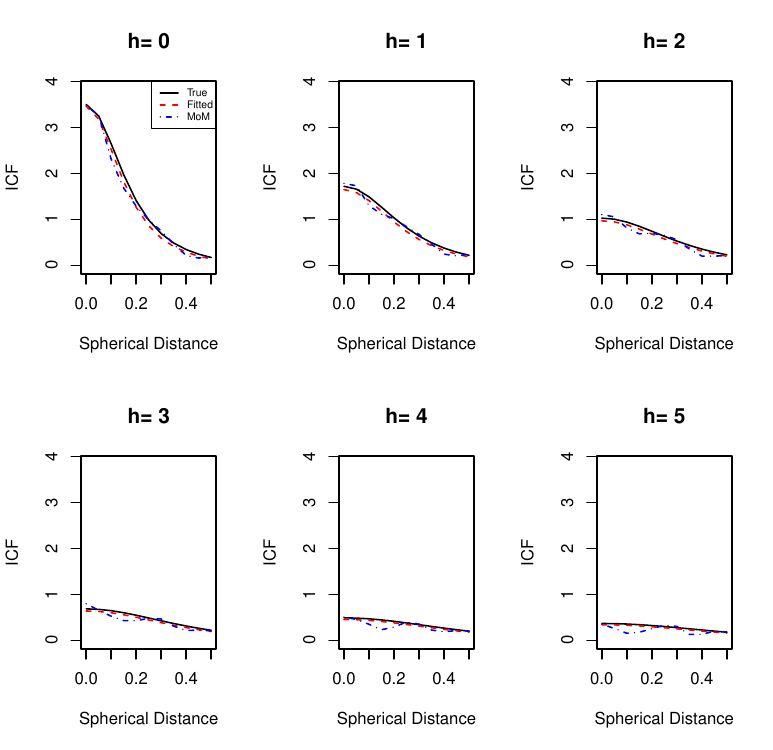}
  \caption{MoM estimates (blue dot-dash line), fitted covariance values (red dashed line), and true covariance values (solid black line) for $IRF(1,1)$ with $\alpha=0.8$, $\beta=0.1$, $\gamma_0=1$, and $\gamma_1=\frac{1}{2 \sqrt{\pi}}$}
  \label{fig:splot}
\end{figure}

\begin{figure}[h!]
\centering
  \includegraphics [width=13cm, height=11cm] {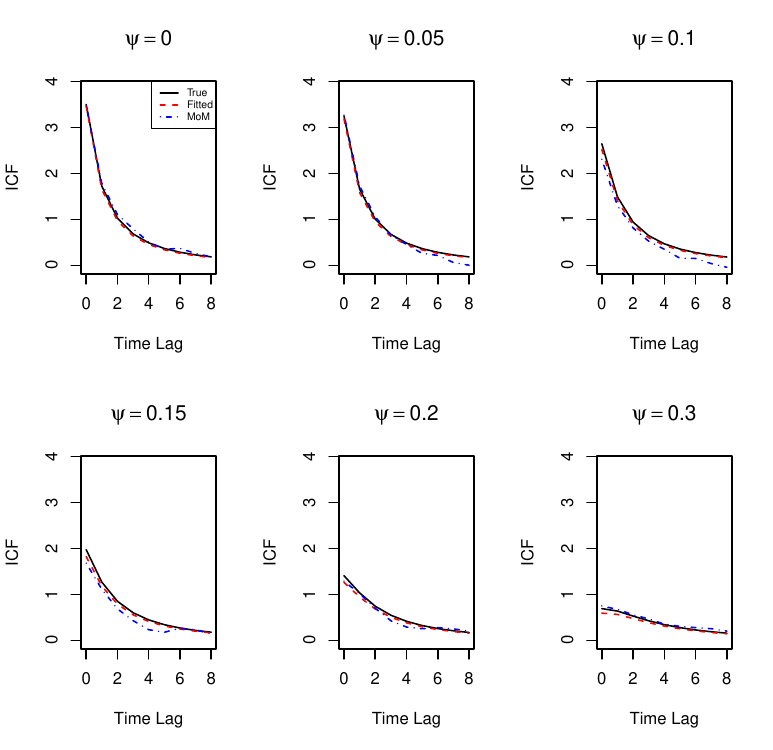}
  \caption{MoM estimates (blue dot-dash line), fitted covariance values (red dashed line), and true covariance values (solid black line) for $IRF(1,1)$ with $\alpha=0.8$, $\beta=0.1$, $\gamma_0=1$, and $\gamma_1=\frac{1}{2\sqrt{\pi}}$}
  \label{fig:tplot}
\end{figure}

\bigskip

In order to access the robustness of the parameter estimates, we also conduct the same simulation multiple times. Figure \ref{fig:para_est_dist} shows the distributions of the 500 parameter estimates with true parameter values $\alpha=0.8$, $\beta=0.1$, and $\gamma_0=1$. "Avg" denotes the average value of the 500 estimates, while "True" refers to the true parameter values used in the simulation dataset. By examining the shape of the distributions and the average values, we can assess the robustness of the estimation.

The performance of parameter estimation can vary based on the true parameter values, as these values influence the decay rates of the covariance functions or intrinsic covariance functions. Table \ref{tab:sim_result} represents the average values and standard deviations of the parameter estimates of the intrinsic covariance function of 500 simulation runs. We conclude that the estimates of $\alpha$ are relatively robust because they are very similar to the true values of the parameters with small variances. Estimation of $\beta$ s also works well, unless its true values are so large that the covariance function decays too fast to be accurately estimated since the parameter $\beta$ is located in the exponential part of $a_\ell(h) = \alpha^\ell e^{-\beta \ell |h|}$.If $\beta$ is too large, the covariance values of pairs will shrink so quickly that they become zero after a time lag of $h \geq 1$. For example, when $\alpha=0.8$ and $\beta=0.9$ as shown in Figure \ref{fig:preliminary_parameters} (bottom left), the covariance value is already almost zero for $h\ge1$. As a result, values of very big $\beta$s become indistinguishable in that their covariance values are always zero unless their time lag $h=0$. The estimation of the scale parameter $\gamma_0$ also performs correctly depending on the other two parameter values. In general, the results of the simulation study allow us to verify the validity of the suggested parameter estimation method.

  \begin{figure}[h!]
\centering
  \includegraphics [width=13cm, height=6cm] {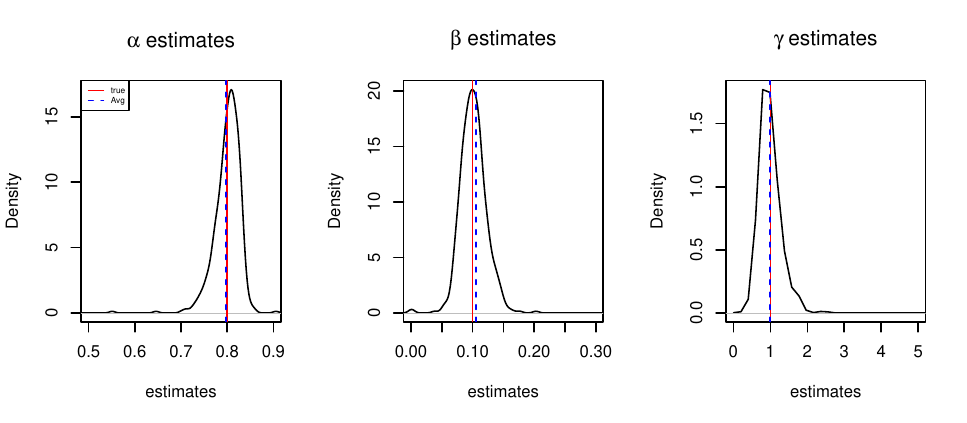}
  \caption{Distributions of estimates with $\alpha=0.8$, $\beta=0.1$, $\gamma_0=1$, and $\gamma_1=\frac{1}{2\sqrt{\pi}}$ for the model of generating functions of Legendre polynomials with $a_\ell(h) = \alpha^\ell e^{-\beta \ell |h|}$ in Table \ref{tab:sim_result}. ``true''(solid, red) means the true parameter value. ``Avg''(blue, dotted) means the average of the estimates. }
  \label{fig:para_est_dist}
\end{figure}

\bigskip

\begin{table}[h!]
\centering
\caption{\label{tab:sim_result}  Parameter estimates of 500 simulations with true parameter values for $IRF(1,1)$ when $\gamma_{1}=\frac{1}{2\sqrt{\pi}}$. Each simulation includes 1,000 locations and 20 temporal points. $a_\ell(h) = \alpha^\ell e^{-\beta \ell |h|}$.}
\begin{tabular}{ |p{1cm}|p{1cm}|p{1cm}||p{1.3cm}|p{1.3cm}|p{1.3cm}||p{1.3cm}|p{1.3cm}|p{1.3cm}|}
 \hline
 \multicolumn{9}{|c|}{Simulation Result} \\
 \hline
 $\alpha$ & $\beta$ & $\gamma_0$ & $Avg(\hat{\alpha})$ & $Avg(\hat{\beta})$  & $Avg(\hat{\gamma_0}$)& $sd(\hat{\alpha})$ & $sd(\hat{\beta})$  & $sd(\hat{\gamma_0}$)\\
 \hline
0.90 & 0.05 & 1.00 & 0.8975 & 0.0124 & 1.0112 & 0.0242 & 0.0517 & 0.5529\\
0.80 & 0.10 & 1.00 & 0.7970 & 0.1240 & 1.0126 & 0.0261 & 0.4451 & 0.3069\\
0.70 & 0.20 & 1.00 & 0.6956 & 0.2067 & 1.0180 & 0.0444 & 0.0603 & 0.4031\\
0.60 & 0.35 & 1.00 & 0.5976 & 0.3676 & 1.0146 & 0.0279 & 0.0989 & 0.2428\\
0.20 & 0.10 & 1.00 & 0.2161 & 0.1368 & 1.4427 & 0.0938 & 0.0812 & 1.5067\\
 \hline
\end{tabular}
\label{tab:sim_result}
\end{table}

\subsection{Analysis of Temperature Anomaly}\label{subsec_temp_anomally}

The National Space Science and Technology Center (NSSTC) at the University of Alabama Huntsville provides data of monthly temperature anomalies  from 1970 to 2019. Each anomaly (measured in degrees Celsius) was computed by subtracting a baseline average from each month's temperature at Earth's surface. The baseline average was measured from 1979 to 1990. Therefore, if some location has a positive temperature anomaly, then it implies that the location at that time point had a higher temperature than the base line average. However, if it shows a negative value, then the temperature of that location was lower than the long-term average. Figure  \ref{fig1} shows examples of January and July in 1980 and 2007 to explain how the data set appears. For the data analysis, 10,368 locations are used with temporal terms of 24 months from 2015 to 2016.\\

\begin{figure}[h!]
\centering
  \includegraphics [width=12cm, height=10cm] {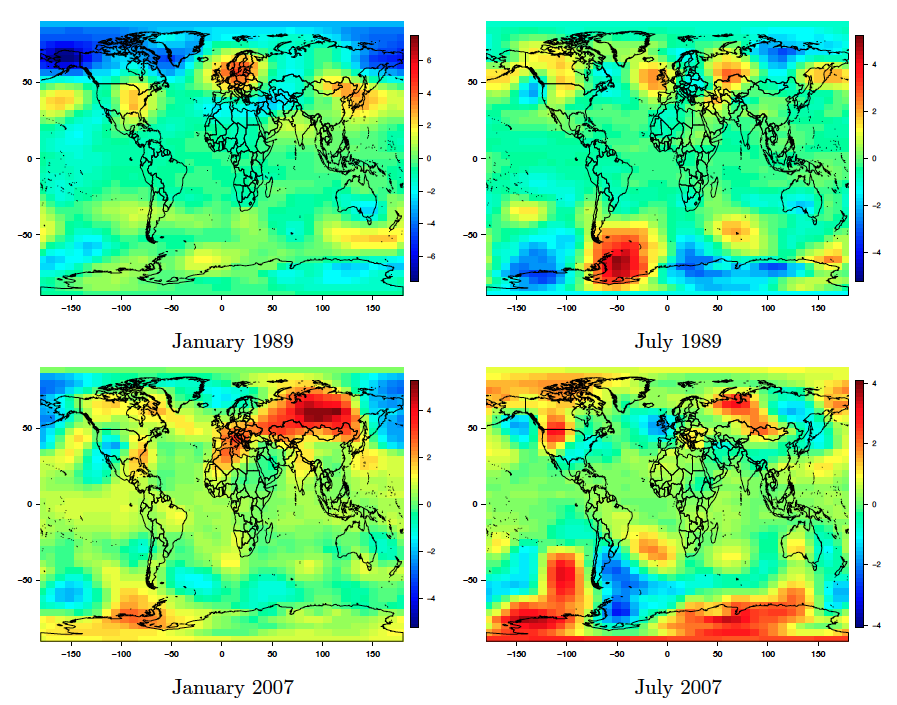}
  \caption{Heat maps for temperature anomaly data for January and July in 1989 and 2007 (Bussberg, 2020) \cite{bussberg2020environmental}}
  \label{fig1}
\end{figure}

\begin{figure}[h!]
\centering
  \includegraphics [width=13cm, height=11cm]
  {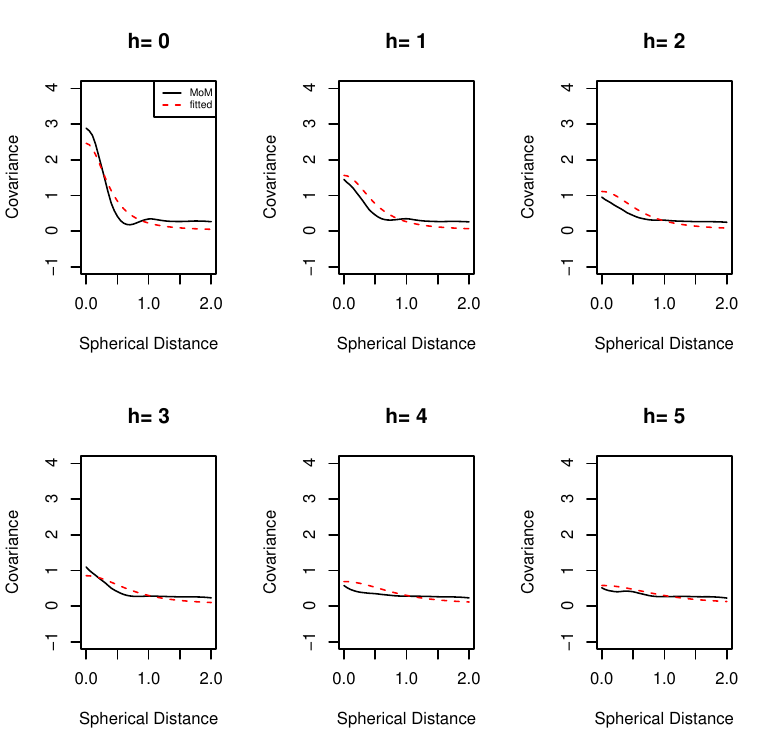}
  \caption{MoM estimates (black solid line) and fitted covariance values (red dashed line) of $IRF(0,0)$ model for the temperature anomaly of 2015-2016. In the first row, time lags (h) are fixed. on the bottom, great circle distances (spherical distances) $\psi$ are fixed.}
  \label{fig_temp_kappa0_inc0_splot}
\end{figure}

\begin{figure}[h!]
\centering
  \includegraphics [width=13cm, height=11cm]
  {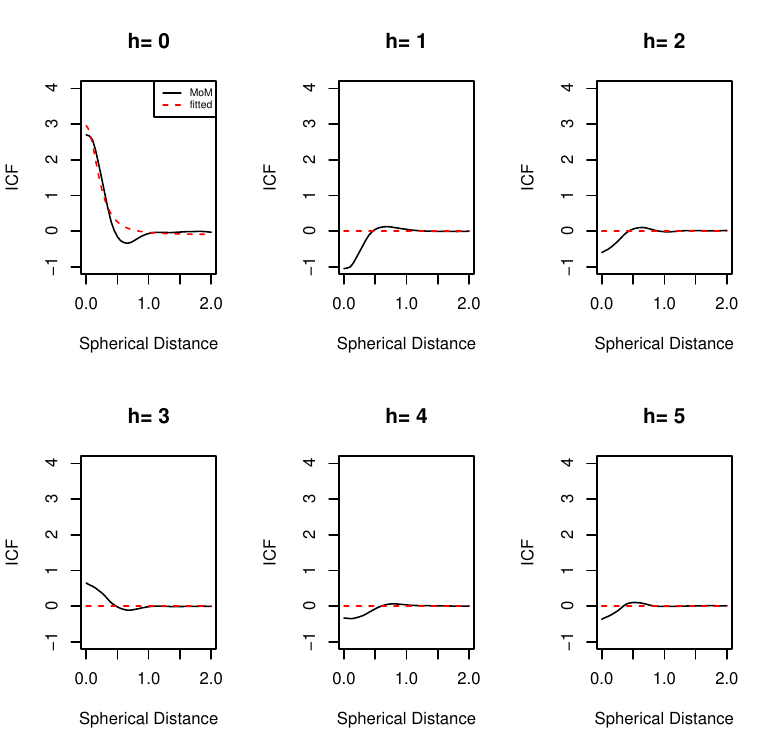}
  \caption{MoM estimates (black solid line) and fitted covariance values (red dashed line) of $IRF(1,1)$ model for the temperature anomaly of 2015-2016. In the first row, time lags (h) are fixed. on the bottom, great circle distances (spherical distances) $\psi$ are fixed.}
  \label{fig_temp_kappa1_inc1_splot}
\end{figure}

\pagebreak

\begin{figure}[h!]
\centering
  \includegraphics [width=13cm, height=11cm]
  {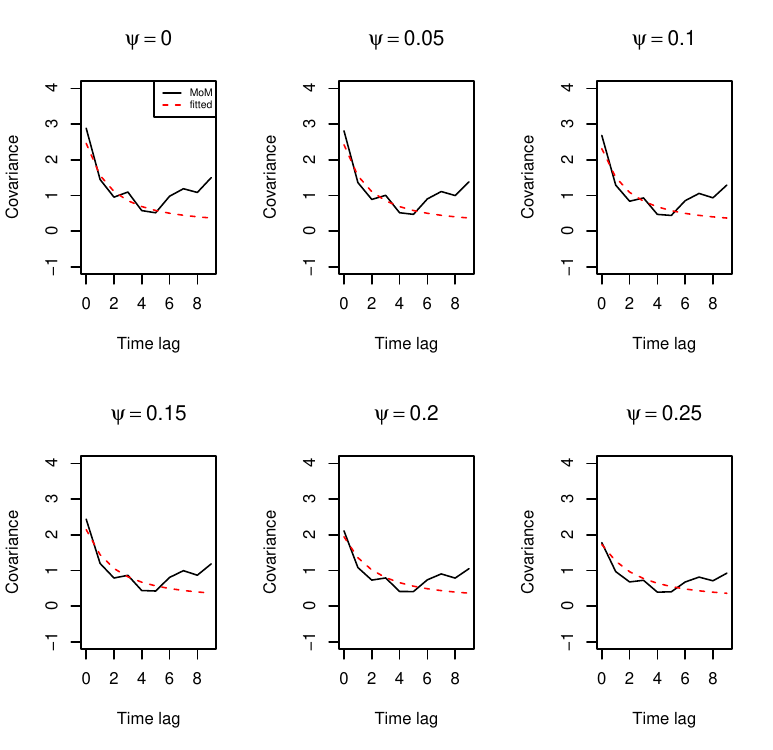}
  \caption{MoM estimates (black solid line) and fitted covariance values (red dashed line) of $IRF(0,0)$ model for the temperature anomaly of 2015-2016. In the first row, time lags (h) are fixed. on the bottom, great circle distances (spherical distances) $\psi$ are fixed.}
   \label{fig_temp_kappa0_inc0_tplot}
\end{figure}

\begin{figure}[h!]
\centering
  \includegraphics [width=13cm, height=11cm]
  {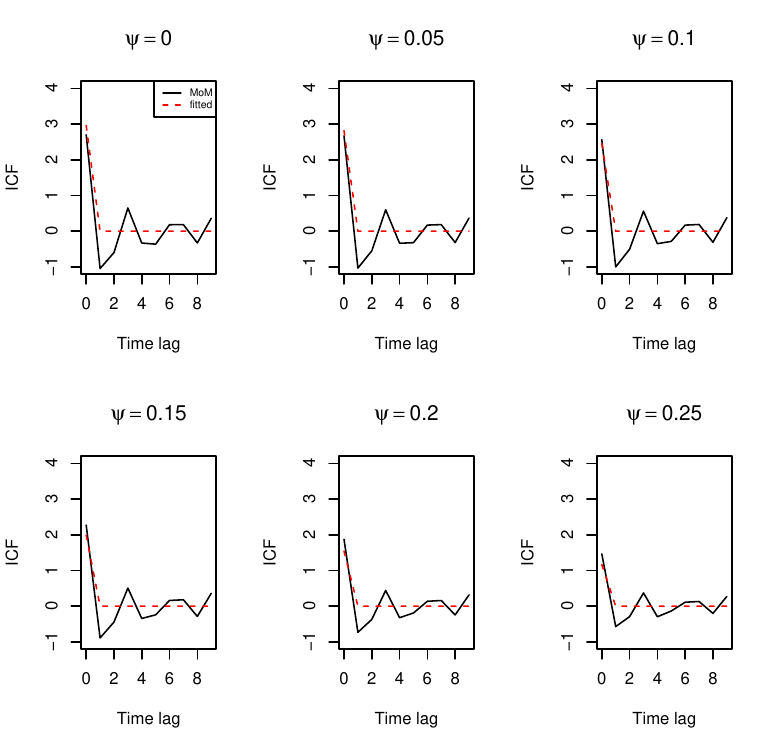}
  \caption{MoM estimates (black solid line) and fitted covariance values (red dashed line) of $IRF(1,1)$ model for the temperature anomaly of 2015-2016. In the first row, time lags (h) are fixed. on the bottom, great circle distances (spherical distances) $\psi$ are fixed.}
   \label{fig_temp_kappa1_inc1_tplot}
\end{figure}

In this analysis, we compare the results obtained using the $IRF(0,0)$ model and the $IRF(1,1)$ model on the dataset. Figures \ref{fig_temp_kappa0_inc0_splot} and \ref{fig_temp_kappa1_inc1_splot} show how the covariance values of $IRF(0,0)$ and the intrinsic covariance function (ICF) values of $IRF(1,1)$ change with great circle distance for a given time lag $h$. As expected, both covariance functions decay as the great circle distance increases. However, the impacts of the time lag $h$ significantly differ between the two. This is further illustrated in Figures \ref{fig_temp_kappa0_inc0_tplot} and \ref{fig_temp_kappa1_inc1_tplot}, which show how the covariance or ICF values change with time lags when the great circle distance is fixed. In Figure \ref{fig_temp_kappa0_inc0_tplot}, we observe that the influence of time lags persists for $IRF(0,0)$ even with relatively large values of $h$. The covariance values of $IRF(0,0)$ decay slowly and do not approach zero, even as time lags increase, as shown in the figure. In time series analysis, graphical methods, such as plotting the covariance or correlation function, are commonly used to detect non-stationarity. If the covariance does not decay rapidly with increasing time lags, it suggests that the process is non-stationary. A characteristic feature of an integrated process is a slowly decaying correlation function (Brockwell, 1991) \cite{Brockwell1991}. In contrast, for $IRF(1,1)$, as shown in Figure \ref{fig_temp_kappa1_inc1_tplot}, the ICF values decay exponentially and approach zero as the time lag increases. This rapid decay may indicate that an integrated process is a suitable model for the dataset.

\bigskip
To choose an appropriate order of non-homogeneity $\kappa$, we can apply the graphical methodology suggested by Bussberg (2020) \cite{bussberg2020environmental}. Given a value of $d$ in $IRF(\kappa,d)$, it has been shown that the intrinsic covariance function, $\phi_{\kappa,d}(\psi, h)$, is spatially homogeneous and temporally stationary. Therefore, for an appropriate estimate, $\hat{\phi}_{\kappa,d}(\overrightarrow{PQ}, \lvert t-s \lvert)$ , one would expect that

\begin{align}
M(n) = \sum_{i=1}^{m_1} \sum_{j=1}^{m_2} \biggl( \hat{\phi}_{n,d}(\psi_i, h_j) - \hat{\phi}_{n+1,d}(\psi_i, h_j) - \{ \hat{\phi}_{n,d}(0, h_j) - \hat{\phi}_{n+1,d}(0, h_j) \} P_k(\cos{\psi_i}) \biggl)^2 \label{kappa_criterion}
\end{align}
to be small for $n \ge \kappa$. The method moment estimator can be used to get $\hat{\phi}_{n,d}(\phi_i, h_j)$. That is,
$$\hat{\phi}_{n,d}(\psi_i, h_j) = \frac{1}{N_{\psi_i, h_j}} \sum_{\{(P,t),(Q,s)\} \in N_{\psi_i, h_j}} \Delta_1^d X_{n,r}(P,t) \Delta_1^d X_{n,r}(Q,s)$$
where $\Delta_1^d X_{n,r}(P,t)$ is an estimator of the truncated process, $\Delta_1^d X_{n}(P,t)$, a residual process of $\Delta_1^d X(P,t)$ regressed on $\{Y_\ell^m(\cdot)\}_{\ell < n}$. The graphical analysis can be derived to estimate the order of non-homogeneity $\kappa$ by using the criterion $M(n)$ in \eqref{kappa_criterion}. Figure \ref{fig:kappa_criterion_plot_log} shows that, for $d=1$, the criterion becomes small when $n \ge 1$, suggesting that $\kappa=1$ is an appropriate level of non-homogeneity. As shown in Figures \ref{fig_temp_kappa1_inc1_splot} and \ref{fig_temp_kappa1_inc1_tplot}, the fitted values and MoM estimates for $IRF(1,1)$ are close to each other. This suggests that, although the true parameter values are unknown, the proposed covariance function performs well from an empirical or data-driven perspective.

\begin{figure}[h!]
\centering
  \includegraphics [width=9cm, height=8cm] {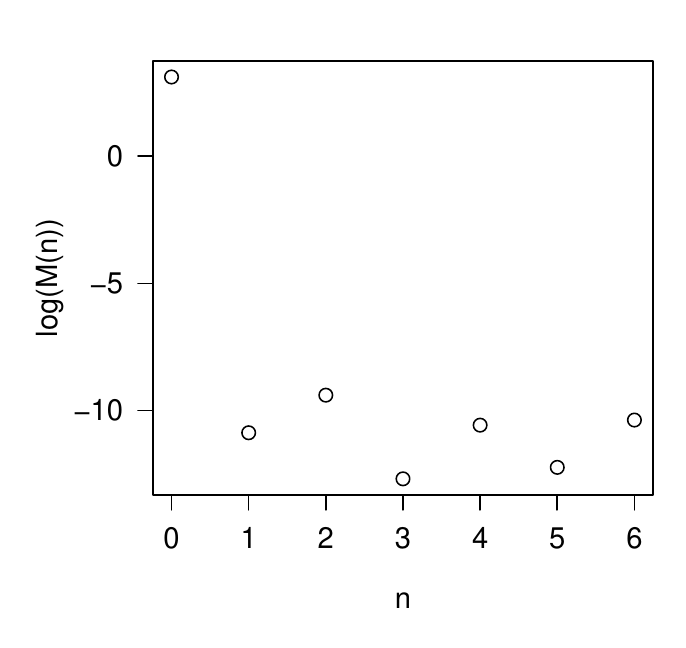}
  \caption{Criterion plot of log M(n) depending on values of n for the temperature anomaly data of 2015-2016.}
   \label{fig:kappa_criterion_plot_log}
\end{figure}

\newpage
2
\pagebreak

\bibliographystyle{unsrt}  
\bibliography{references}

\end{document}